\newcommand{\cameraready}[1]{{#1}}%
\def\pa{\mathsf{pa}}
\def\dep{\mathrm{depth}}
\def\dist{\mbox{\rm dist}}
\def\BP{\mathrm{BP}}
\def\DFUDS{\mathrm{DFUDS}}
\def\LOUDS{\mathrm{LOUDS}}
\def\pda{\mathsf{pda}}
\def\rank{\mathsf{rank}}
\def\select{\mathsf{select}}
\def\bpselect{\mathsf{bpselect}}
\def\rmq{\mathsf{rmq}}
\def\open{\mathsf{open}}
\def\close{\mathsf{close}}
\def\CP{\mathrm{CP}}
\def\OP{\mathrm{OP}}
\def\DFT{\mathrm{DFT}}
\def\return{\mathrm{return}}
\def\ova{\overleftrightarrow}
\newtheorem{proposition}{\bf Proposition}
\newtheorem{problem}{Problem}
\newtheorem{observation}{Observation}
\title{Dualities in Tree Representations}
\titlerunning{Tree Representation Duality} 
 \author{Rayan Chikhi}{CNRS, Universit\'e de Lille, CRIStAL, Lille, France}{rayan.chikhi@univ-lille1.fr}{}{}
\author{Alexander Sch\"onhuth}{Centrum Wiskunde \& Informatica, Amsterdam, The Netherlands}{alexander.schoenhuth@cwi.nl}{}{}
\authorrunning{R. Chikhi and A. Sch\"onhuth} 
\subjclass{\ccsdesc[500]{Mathematics of computing~Trees}}
\keywords{Data Structures, Succinct Tree Representation, Balanced Parenthesis Representation, Isomorphisms}
\begin{document}

\maketitle

\begin{abstract}
  A characterization of the tree $T^*$ such that
  $\BP(T^*)=\ova{\DFUDS(T)}$, the reversal of $\DFUDS(T)$ is given. An
  immediate consequence is a rigorous characterization of the tree
  $\hat{T}$ such that $\BP(\hat{T})=\DFUDS(T)$.  In summary, $\BP$ and
  $\DFUDS$ are unified within an encompassing framework, which might
  have the potential to imply future simplifications with regard to
  queries in $\BP$ and/or $\DFUDS$. Immediate benefits displayed here
  are to identify so far unnoted commonalities in most recent work on
  the Range Minimum Query problem, and to provide improvements for the
  Minimum Length Interval Query problem.
\end{abstract}

\section{Motivation}
\label{ssec.motivation}

\begin{sloppypar}
Given an array $A[1,n]$ with elements from a totally ordered set, the
Range Minimum Query (RMQ) problem is to provide a data structure that
on input positions $1\le i\le j\le n$ returns
\begin{equation}
  \label{eq.rmq}
  \rmq_A(i,j) := \min\{A[k]\mid i\le k\le j\}.
\end{equation}
In \cite{Fischer2011}, Fischer and Heun presented the first data
structure that uses $2n + o(n)$ bits and answers queries in O(1) time
(in fact, without accessing A). They first construct a tree $\cameraready{T[A]}$
(the 2D-Min-Heap of $A$). Then they observe that in a certain
parenthesis representation of $\cameraready{T[A]}$ ($\DFUDS$), the following query
leads to success for computing $\rmq_A(i,j)$ \cameraready{(where $0$ and $1$ refer
to closing and opening parentheses in $\DFUDS(\cameraready{T[A]})$, respectively)}:
\begin{align}
  \label{al.1}w_1 & \leftarrow \rmq_D(\select_0(i+1),\select_0(j))\\
  \label{al.2}\text{if} & \quad \rank_0(\open(w_1)) = i \;\text{then}\; \return\; i\\
  \label{al.3}\text{else} & \quad \return\; \rank_0(w_1)
\end{align}
where $\rmq_D$ refers to performing a range minimum query on the array
$D[x]:=\rank_1(x) - \rank_0(x)$ where $x$ indexes parentheses in
$\DFUDS(\cameraready{T[A]})$, and $1$ and $0$ represent opening and closing
parentheses, respectively. $\open(w_1)$ returns the position of the
opening parenthesis matching the one closing at position $w_1$. Note
that $D[x]-D[x-1]\in \{-1,+1\}$ for all $x\in\{2,...,2N\}$, which
turns $\rmq_D$ into an easier problem ($\pm 1$-RMQ), as was shown in
\cite{Bender00}.
\end{sloppypar}

Most recently, Ferrada and Navarro suggested an alternative approach
which leads to a shorter, hence faster query
procedure~\cite{Ferrada2017}. They construct a tree $\widehat{\cameraready{T[A]}}$ that
results from a systematic while non-trivial transformation of the
edges of $\cameraready{T[A]}$ (the number of non-root nodes $N$ remains the same).
They observed that in $\BP(\widehat{\cameraready{T[A]}})$ the following simpler query computes $\rmq_A(i,j)$:
\begin{align}
  \label{eq.ferrada1} w_2 & \leftarrow \rmq_D(\select_0(i), \select_0(j))\\
  \label{eq.ferrada2}     & \quad \return\; \rank_0(w_2)
\end{align}
The \emph{major motivation of our treatment} is the
observation---which passes unnoted in both
\cite{Ferrada2017,Fischer2011}---that
\begin{equation}
  \label{eq.motivation}
  \DFUDS(\cameraready{T[A]}) = \BP(\widehat{\cameraready{T[A]}})
\end{equation}
So, the shorter query raised by Ferrada and Gonzalez
would have worked for Fischer and Heun as well. It further raises the
question whether there are principles by which to transform trees $T$
into trees $\hat{T}$ such that
\begin{equation}
  \label{eq.hatduality}
  \DFUDS(T) = \BP(\hat{T})
\end{equation}
and, if so, what these principles look like. Here, we thoroughly
investigate related questions so as to obtain conclusive insight.  We
will show that the respective trees and their possible representations
can be juxtaposed in terms of \emph{a new duality for tree
  representations}.  In doing so, we will obtain a proof
for \eqref{eq.motivation} as an easy corollary (to consolidate our
findings, we also give a direct proof that \cite{Ferrada2017}'s query
also would have worked for \cite{Fischer2011} in Appendix
\ref{app.A}). 
In summary,  our treatment puts $\BP$ and
$\DFUDS$ into a unifying context.\\

\subsection{Related Work}
\label{ssec.related}

\noindent \textbf{RMQ's.} The RMQ problem has originally
been anchored in the study of Cartesian trees \cite{Vuillemin80},
because it is related to computing the least common ancestor (LCA) of two
nodes in a Cartesian tree derived from $A$ \cite{Gabow84}, further
complemented by the realization that any LCA computation can be cast
as an $\pm 1$-RMQ problem \cite{Berkman93} for which subsequently
further improvements were raised \cite{Munro2001,Sadakane07}. Fischer
and Heun finally established the first structure that requires
$2n+o(n)$ space and $O(1)$ time (without accessing $A$)
\cite{Fischer2011}, establishing an anchor point for many
related topics (e.g.~\cite{Navarro2013,Navarro2014}), which justified
to strive for further improvements \cite{Ferrada2017,Grossi2013}.

\textbf{Isomorphisms.}  For their latest (and likely conclusive)
improvements, \cite{Ferrada2017} made use of an isomorphism between
binary and general ordinary trees, presented in \cite{Munro2001}, and
successfully experiment with certain variations on the ground theme of
this isomorphism, to finally obtain the above-mentioned
$\widehat{\cameraready{T[A]}}$. Here, we provide an explicit treatment of these
trees, which \cite{Ferrada2017} are implicitly making use of.  From
this point of view, we provide a rigorous re-interpretation of the
treatments \cite{Ferrada2017,Fischer2011} and the links drawn with
\cite{Munro2001} therein. \cameraready{Finally, note that \cite{Davoodi2017}
  further expands on \cite{Munro2001}.}

\textbf{BP and DFUDS.} The $\BP$
representation was first presented in \cite{Jacobson1989} and
developed further in many ways (e.g.~\cite{Munro2001}). Since neither the $\BP$ nor the $\LOUDS$
\cite{Delpratt2006,Jacobson1989} representations allow for a few basic
operations relating to children and subtrees, the $\DFUDS$
representation was presented as an improvement in this regard
\cite{Benoit2005,Jansson2007}. 
\cameraready{A tree-unifying approach different to ours was proposed by Farzan \emph{et al}~\cite{Farzan2009}. 
  \cite{Davoodi2017} observes relationships between $\BP$ and $\DFUDS$ and proves them
  via the (above-mentioned) isomorphism by \cite{Munro2001}. Since our treatment avoids binary trees altogether,
it establishes a more direct approach to identifying dualities
between ordinal trees than \cite{Davoodi2017}.}

\subsection{Notation}
\label{ssec.notation}

\textbf{Trees.} Throughout, we consider rooted, ordered trees
$T=(V,E)$ (with nodes $V=V[T]$ and (directed) edges $E=E[T]$) with
root $r$. \cameraready{For the sake of notational convenience (following standard
abuse of tree notation), we will write $v\in T$ instead of $v\in V[T]$
and $T_1\subset T_2$ for $V[T_1]\subset V[T_2]$; note that induced
subgraphs do not play a relevant role in this treatment.}  By
definition of ordered trees, \emph{siblings}, that is nodes sharing
their parent node are ordered, implying the notions of left, right,
immediate right, immediate left siblings.  By $rmc_T(v)$, we denote
the rightmost child of a node $v$ in $T$ if it exists (if $T$ is
understood, we write $rmc(v)$).  Similarly, we denote by $ils_T(v)$
(or $ils(v)$ if $T$ is understood) the immediate left sibling of $v$
in $T$ if it exists. For two siblings, $u<v$ means that $u$ is left of
$v$. As usual, the partial order on siblings can be extended to a full
order, ordering all $v\in T$, by depth-first-traversal (or
breadth-first-traversal) logic, for example; here, by default, we
write $u<_Tv$ (or $u<v$ if $T$ is understood) if $u$ comes before $v$
in the depth-first traversal of $T$. We write $u=\pa(v)$ indicating
that $u$ is the parent of $v$, that is $(u,v)$ is a directed edge in
$T$.

\textbf{Parenthesis Based Tree Representations.} In the
following, we will deal with parenthesis based representations for
trees, which are vectors of opening parentheses '(' and closing
parentheses ')'. The number of opening parentheses will match
the number of closing parentheses, thereby for a tree $T$, each node $v\in T$
will be represented by a pair of opening and closing parentheses, for which
we write $\OP(v)$ and $\CP(v)$, respectively. 

The \emph{Balanced Parenthesis (BP)} representation $\BP(T)$
(e.g.~\cite{Jacobson1989,Munro2001}) is built by traversing $T$ in
depth-first order, writing an opening parenthesis when reaching a node
for the first time, and writing a closing parenthesis when reaching a
node for the second time. By depth-first order logic, this yields a
balanced representation, meaning that the number of opening matches
the number of closing parentheses (see Figure
  \ref{fig:theonlyfig}). By default, a node is identified with its opening
parenthesis $\OP(v)$.

The \emph{Depth-First Unary Degree Sequence (DFUDS)} representation
$\DFUDS(T)$ \cite{Benoit2005} is again obtained by traversing $T$ in
depth-first order, but, when reaching a node with $d$ children for the
first time, writing $d$ opening parentheses and one closing
parenthesis (and writing no parentheses when reaching it for the
second time). This sequence of parentheses becomes balanced when
appending an opening parenthesis at the beginning. It is further
convenient to identify a node with the parenthesis preceding the block
of opening parentheses that represent its children\footnote{Literature
  references are ambiguous about the exact choice of parenthesis.
  None of the alternative choices, like the first opening parenthesis
  or the closing parenthesis following the block of opening
  parentheses, would lead to any real complications also in our
  treatment.}, which for all non-root nodes is a closing
parenthesis. In other words, in $\DFUDS$, the $i$-th closing
parenthesis reflects the $i$-th non-root node in DFT order. Note that,
according to this definition, when matching opening parentheses with
closing parentheses in a balanced manner, the opening parentheses in
one block refer to the children of the closing parenthesis preceding
the block from right to left.

\textbf{Rank/Select/Open/Close.} In the following, we will treat
parenthesis vectors as bitvectors, where opening and closing
parentheses are identified with $1$ and $0$. Let $B\in\{0,1\}^n$ be a
bitvector and $x\in\{1,...,n\}$ (for enhanced exposition, running
indices run from $1$ to $n$). Then $\rank_{B,0}(x),\rank_{B,1}(x)$ are
defined to be the number of $0$'s or $1$' in $B$ up to (and including)
$B[x]$. Further, $\select_{B,0}(i),\select_{B,1}(i)$ are defined to be
the position of the $i$-th $0$ or $1$ in $B$ (if this exists). We omit
the subscript $B$ and write
$\rank_0(x),\rank_1(x),\select_0(i),\select_1(i)$ if the choice of $B$
is evident. As a relevant example (see \eqref{eq.ferrada1}), for
$\DFUDS(T)$ and $v\in T$, we have $\CP(v) = \select_0(i)$ if and only
if $\DFT(v)=i+1$, that is $v$ is the $i+1$-th node in depth-first
traversal order, also counting the root. We further write $\open(x)$
and $\close(x)$ to identify the matching partner in a (balanced
parenthesis) bitvector, that is $\open(x)$ for a position $x$ in $B$
with $B[x]=0$ is the position of the $1$ matching $x$ and vice versa
for $\close(x)$.

\subsection{Outline of Sections}
\label{ssec.outline}

We will start with the definition of a dual tree $T^*$ of $T$ in
section \ref{sec.definition}; according to this definition, $T^*$ is a
directed graph, so we still have to prove that $T^*$ is a tree, which
we will do immediately afterwards. We proceed by proving $(T^*)^*=T$,
arguably necessary for a well-defined duality. In section
\ref{ssec.treereversal}, we then show how to decompose our duality
into subdualities by introducing the definition of a reversed tree
$\ova{T}$. We conclude by providing the definition of $\hat{T}$ as the
reversed dual tree; without being able to provide a proof at this
point, note that $\hat{T}$ will turn out to be the tree from
\eqref{eq.hatduality}.

In section \ref{sec.pda}, we provide the definition of a primal-dual
ancestor, which is crucial for re-interpreting RMQ's in terms of the
notions of duality provided here. Upon having proven the unique
existence of the primal-dual ancestor in theorem \ref{t.pda}, we
re-interpret RMQ's, and beyond that not only re-interpret, but also
improve on running minimal length interval queries (MLIQ's) both in
terms of space requirements and query counts.

We will finally prove our main theorem in
section \ref{sec.bpdfuds}.

\begin{theorem}
  \label{thm:bpdfuds}
  Let $T$ be a tree and let the reversal $\ova{B}$ of a bitvector $B$
  be defined by $\ova{B}[x]:= \cameraready{1-} B[n-x+1]$, \cameraready{$\forall x \in \{1,\ldots,n\}$}. Then
  \begin{equation}
    \BP(T)=\ova{\DFUDS(T^*)}.
  \end{equation}
\end{theorem}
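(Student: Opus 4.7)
My plan is to establish the equality position-by-position. Since $|V[T]|=|V[T^*]|=:n$ by the preceding sections, both $\BP(T)$ and $\ova{\DFUDS(T^*)}$ are bitvectors of length $2n$ containing $n$ opening and $n$ closing parentheses, so it suffices to show that the set of opening-paren positions in $\BP(T)$ coincides with the set of opening-paren positions in $\ova{\DFUDS(T^*)}$.

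First I would unpack both sides. In $\BP(T)$, the opening positions are in bijection with $V[T]$ via $v\mapsto \OP(v)$, the first DFT-visit of $v$. In $\DFUDS(T^*)$, by the convention of Section \ref{ssec.notation}, each node $w\in V[T^*]$ owns a unique closing parenthesis (the block-terminator that follows its $d_w$ children-openings); after the reversal, these closings become openings of $\ova{\DFUDS(T^*)}$ at positions $2n+1-\CP(w)$. Thus opening positions of $\ova{\DFUDS(T^*)}$ are in bijection with $V[T^*]$.

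Next, using the definition of $T^*$ from Section \ref{sec.definition} and the primal-dual ancestor machinery of Section \ref{sec.pda}, I would exhibit a canonical bijection $\phi:V[T]\to V[T^*]$ for which the identity $\OP(v) = 2n+1-\CP(\phi(v))$ holds for every $v$. Theorem \ref{t.pda} on the unique existence of the primal-dual ancestor is expected to be the essential ingredient here: by attaching an unambiguous node-theoretic meaning to each position on both sides, it lets the DFT-based positional arithmetic on $T$ be transported through $\phi$ to the DFUDS-based positional arithmetic on $T^*$.

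The main obstacle will be the structural mismatch between $\BP$ and $\DFUDS$: $\BP$ interleaves one opening and one closing per node, whereas $\DFUDS$ groups the $d_w$ children-openings of each $w$ together with its single closing. I would discharge this either by induction on $|V[T]|$---removing a carefully chosen leaf (e.g.\ the rightmost in DFT of $T$) and tracking the effect on both $T$ and $T^*$---or by a direct bookkeeping argument using the recursive characterization of $T^*$ and the fact that reversing DFUDS turns a \emph{$d_w$-opens-then-close} pattern into an \emph{open-then-$d_w$-closes} pattern, thereby matching the local shape of $\BP$. The involutory property $(T^*)^{*}=T$ shown earlier ensures that the argument is symmetric, and together with $\hat T=\ova{T^*}$ it should yield \eqref{eq.motivation} as an immediate corollary.
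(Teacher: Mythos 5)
Your proposal is a plan rather than a proof, and the step it leaves open is precisely the content of the theorem. The reduction to ``the opening-paren positions coincide'' is fine, and the bijection you describe does exist: by Lemma~\ref{l.dualorder} the closing parentheses of $\DFUDS(T^*)$ enumerate the non-root nodes in $<^*$-order, which reverses $<$-order, so after reversal the $j$-th opening of $\ova{\DFUDS(T^*)}$ and the $j$-th opening of $\BP(T)$ ``correspond to'' the same node of $T$. But this ordinal correspondence holds automatically between \emph{any} two balanced sequences with $n$ openings and says nothing about where those openings sit; the theorem is exactly the claim that the number of closing parentheses interleaved between consecutive openings agrees on both sides, i.e.\ the positional identity $\OP(v)=2n+1-\CP(\phi(v))$. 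That identity is never established: you defer it to ``either an induction on $|V[T]|$ removing a leaf, or a direct bookkeeping argument,'' neither of which is carried out, and the leaf-removal variant is genuinely delicate because deleting a leaf of $T$ rewires $T^*$ nonlocally (the deleted node may be the $T^*$-parent of many nodes, by Lemma~\ref{l.dualparent}). Moreover, the ingredient you single out as essential, Theorem~\ref{t.pda} on primal-dual ancestors, plays no role in this identity and it is unclear how it would produce one; it concerns pairs of nodes and is used in the paper only for the RMQ reinterpretation.

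For comparison, the paper does not do positional arithmetic at all. It proves a decomposition lemma (Lemma~\ref{lem:dual-decomposition}, built on the tree-joining operation of Definition~\ref{def:treejoining} and a quasi-subtree argument) stating that for $T$ with root $r$ and subtrees $A_1,\ldots,A_n$ one has $T^*=(r\rightarrow A_1)^*\curvearrowright\cdots\curvearrowright(r\rightarrow A_n)^*$, and then runs a structural induction on depth: $\BP(T)$ factors as $\underline{(}\,\BP(A_1)\cdots\BP(A_n)\,\underline{)}$, while the join decomposition makes $\DFUDS(T^*)$ factor as the reversed concatenation of the blocks $\DFUDS((r\rightarrow A_i)^*\setminus\{a_i\})=\ova{\BP(A_i)}$ by the inductive hypothesis. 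Something of this kind---a factorization of $\DFUDS(T^*)$ aligned with the subtree factorization of $\BP(T)$---is what your ``direct bookkeeping'' alternative would have to become to close the gap.
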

    
Returning to \cite{Ferrada2017}, we will finally demonstrate that
\eqref{eq.motivation}, our motivating insight, indeed holds.

\section{Tree Duality: Definition}
\label{sec.definition}

\begin{definition}[Dual tree]
  \label{def.dualtree}
  Let $T$ be a tree. The dual tree $T^*$ of $T$ is a directed graph
  that has the same vertices as $T$. Edges and order (among nodes
  sharing a parent) are given by the following rules, where we write
  $\pa^*(v)$ for the parent of $v$ in $T^*$:
\begin{itemize}
\item \textbf{Rule 1a}: The root $r$ of $T$ is also the root of $T^*$,
  that is $r$ has no parent also in $T^*$.
\item \textbf{Rule 1b}: If $v=rmc_T(r)$ then also $v=rmc_{T^*}(r)$,
  implying in particular that $\pa^*(v)=r$.
\item \textbf{Rule 2}: If $v=rmc_T(u)$ with $u\ne r$, then
  $v=ils_{T^*}(u)$, implying that $\pa^*(v)=\pa^*(u)$.
\item \textbf{Rule 3}: If $v=ils_T(u)$, then $v=rmc_{T^*}(u)$,
  implying that $\pa^*(v)=u$.
\end{itemize}
\end{definition}

\begin{remark}
  Rules 1a, 1b, 2 and 3 immediately imply that $T^*$ is a directed
  graph where each node other than $r$ has one parent. Note that the
  existence of a parent due to Rule 2 is guaranteed by induction on
  the depth of a node in $T$, where Rule 1b makes the start.
\end{remark}

\begin{figure}
\centering\includegraphics[width=.9\textwidth]{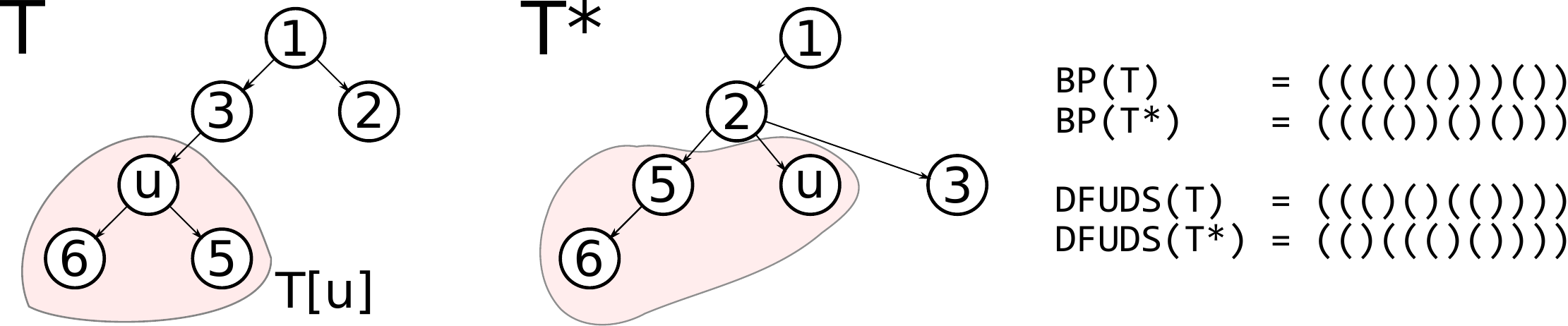}
\caption{\label{fig:theonlyfig} A tree and its dual, along with the BP and DFUDS representations. A subtree $T[u]$ is also highlighted, along with the corresponding nodes in the dual.}
\end{figure}

\begin{remark}
  It is similarly immediate to observe that there is a well-defined
  order among nodes that share a parent. It suffices to notice that in
  $T^*$ each node either is a rightmost child (Rules 1b, 3), or it is
  the (unique) immediate left sibling of another node (Rule 2).
\end{remark}

All nodes but $r$ have exactly one (incoming) edge, which implies
$|E|=|V|-1$. To conclude that $T^*$ is a tree, it remains to show
that $T^*$ contains no cycles, which we immediately do:

\begin{theorem}
  \label{t.dualtree}
  $T^*$ is a well-defined, rooted, ordered tree.
\end{theorem}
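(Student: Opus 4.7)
The preceding Remarks have already established that every non-root node has exactly one incoming edge in $T^*$ (so $|E[T^*]|=|V[T]|-1$) and that a well-defined sibling order exists among nodes sharing a parent. All that remains is acyclicity: once we have that, the uniqueness of parents forces every chain $v,\pa^*(v),\pa^*(\pa^*(v)),\ldots$ to terminate after finitely many steps, and since $r$ is the only parent-less node every such chain must terminate at $r$. This yields connectedness, hence the tree structure, with $r$ as the root.

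The plan for acyclicity is to exhibit a strictly monotone potential along $\pa^*$-edges. Let $\mathrm{post}_T(v)$ denote the position of $v$ in the postorder traversal of $T$ (children first, then the node itself). I will show that $\mathrm{post}_T(\pa^*(v))>\mathrm{post}_T(v)$ for every $v\ne r$, which immediately forbids cycles.

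To verify the monotonicity, I first unfold the recursive Rule 2 into a closed form for $\pa^*(v)$. Starting from $v$, walk up the $T$-ancestor chain as long as the current node is the rightmost child of its parent; this yields a maximal sequence $v=v_0,v_1=\pa(v_0),\ldots,v_k$ with $v_i=rmc_T(v_{i+1})$ and $v_{i+1}\ne r$ for every $i<k$. Either (a) $v_k=r$, in which case iterating Rule 2 from $v$ upwards followed by Rule 1b at the last step gives $\pa^*(v)=r$; or (b) $v_k\ne r$ is not a rightmost child of $\pa(v_k)$, in which case iterating Rule 2 followed by Rule 3 gives $\pa^*(v)$ equal to the immediate right sibling of $v_k$ in $T$. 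This unfolding needs only a short induction on the depth of $v$ in $T$.

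The monotonicity then falls out case by case. In case (a), $\mathrm{post}_T(r)=|V[T]|$ is the maximum possible postorder value and strictly exceeds $\mathrm{post}_T(v)$ since $v\ne r$. In case (b), the subtree rooted at $v_k$ in $T$ occupies a contiguous postorder interval ending at $\mathrm{post}_T(v_k)$; since $v$ belongs to this subtree we have $\mathrm{post}_T(v)\le\mathrm{post}_T(v_k)$, while the immediate right sibling of $v_k$ is visited in postorder strictly after all of $v_k$'s subtree, hence $\mathrm{post}_T(\pa^*(v))>\mathrm{post}_T(v_k)\ge\mathrm{post}_T(v)$. The only delicate step I anticipate is the bookkeeping for the Rule 2 unfolding; once the closed form is in hand, both the monotonicity and the deduction of connectedness are essentially immediate.
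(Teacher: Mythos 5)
Your proof is correct and follows essentially the same route as the paper: the paper's Lemma~\ref{l.dualparent} is proved by exactly your induction-on-depth unfolding of Rule~2 (its characterization $\pa^*(v)=\min R[v]$ is your ``immediate right sibling of the top of the maximal rightmost-child chain'' in different clothing), and the paper then deduces acyclicity from $v<_T\pa^*(v)$ just as you do from $\mathrm{post}_T(v)<\mathrm{post}_T(\pa^*(v))$ --- the two potentials agree here since $\pa^*(v)$ is never an ancestor or descendant of $v$. The only blemish is the off-by-one in your chain condition ($v_{i+1}\ne r$ for all $i<k$ contradicts case~(a) where $v_k=r$), which is presentational rather than a gap.
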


We do this by explicitly specifying the parents of nodes in $T^*$, by
making use of the depth-first traversal order $<$ in $T$. For this,
let $T[v]$ be the \emph{subtree} of $T$ that hangs off (and includes)
$v\in T$, i.e. $T[v]$ contains $v$ and all its descendants in $T$. Let further
\begin{equation*}
  R[v] := \{u\in T\setminus T[v] \mid v<u\}
\end{equation*}
be all nodes ``right of'' $v$ according to depth-first traversal
order. For two nodes $u,v$ where $u$
is an ancestor of $v$, we immediately note that
\begin{equation}
  \label{eq.subrelations}
  T[v] \subset T[u], \quad R[u] \subset R[v]\quad \text{ and } \quad R[v]\subset T[u]\;\cup\;R[u]
\end{equation}

\noindent For a node $v\in T\setminus \{r\}$, we then obtain the following lemma:

\begin{lemma}
  \label{l.dualparent}
  \begin{equation*}
    \pa^*(v) = 
    \begin{cases}
      \min R[v] & R[v]\ne\emptyset\\
      r         & R[v] = \emptyset
    \end{cases}
  \end{equation*}
\end{lemma}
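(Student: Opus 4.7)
The plan is to prove the lemma by case analysis on which of Rules 1b, 2, 3 from Definition~\ref{def.dualtree} determines $\pa^*(v)$, and in the Rule~2 case to run an induction on the depth of $v$ in $T$. Every non-root $v$ falls into exactly one case: either $v$ has an immediate right sibling $u$ in $T$ (i.e.\ $v=ils_T(u)$, Rule~3), or $v$ is the rightmost child of its parent $\pa_T(v)$, in which case we split on whether $\pa_T(v)=r$ (Rule~1b) or $\pa_T(v)\neq r$ (Rule~2).

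In the Rule~3 case, I would argue that $\min R[v] = u$. Indeed, the depth-first traversal first visits all of $T[v]$, and the very next node it visits is $u$; moreover $u\notin T[v]$, so $u\in R[v]$ and $u$ is smaller in $<_T$ than any other element of $R[v]$. Combined with Rule~3 this gives $\pa^*(v)=u=\min R[v]$. In the Rule~1b case, I would observe that since $v=rmc_T(r)$, depth-first traversal finishes exactly when $T[v]$ is finished, so no node satisfies $v<u$, hence $R[v]=\emptyset$; Rule~1b then yields $\pa^*(v)=r$, matching the lemma.

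The Rule~2 case is the real content. Write $p:=\pa_T(v)\neq r$ and proceed by induction on $\dep_T(v)$, with the base case $\dep_T(v)=1$ handled by Rules 1b/3 above. Because $v=rmc_T(p)$, the subtree $T[p]$ is traversed by first visiting $p$, then its children left-to-right, and it closes exactly at the end of $T[v]$. Consequently the set of nodes strictly following $T[v]$ in DFT coincides with the set of nodes strictly following $T[p]$, which together with $p<v$ gives $R[v]=R[p]$; in particular the two sets are simultaneously empty or non-empty. By the induction hypothesis applied to $p$ (which has strictly smaller depth), $\pa^*(p)$ equals $\min R[p]$ if $R[p]\neq\emptyset$ and equals $r$ otherwise. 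Rule~2 forces $\pa^*(v)=\pa^*(p)$, and since $R[v]=R[p]$ the required identity for $v$ follows.

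The main obstacle, as I see it, is the careful DFT-order bookkeeping that justifies $R[v]=R[p]$ when $v=rmc_T(p)$, and the analogous claim that the immediate right sibling realises the minimum of $R[v]$ in Rule~3; both follow cleanly once one unpacks the containment relations in \eqref{eq.subrelations}, but they must be stated explicitly so that the inductive step in Rule~2 does not silently rely on them. Once these order facts are in place, the case analysis composes with the rules to give the lemma.
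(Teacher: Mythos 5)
Your proof is correct and follows essentially the same route as the paper: a case analysis on Rules 1b, 3, and 2, with an induction on $\dep_T(v)$ for the Rule~2 case that transfers the statement from $\pa_T(v)$ to $v$ via $\pa^*(v)=\pa^*(\pa_T(v))$. The only (cosmetic) difference is that you establish the clean identity $R[v]=R[\pa_T(v)]$ directly from the fact that $T[v]$ and $T[\pa_T(v)]$ end at the same point of the traversal, whereas the paper derives the same conclusion by splitting into the cases $R[\pa_T(v)]=\emptyset$ and $R[\pa_T(v)]\neq\emptyset$ and refuting the existence of a right sibling of $v$ in each.
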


\noindent We refer to 
Appendix \ref{app.applemma1} 
for the proof of
Lemma \ref{l.dualparent}. Using Lemma \ref{l.dualparent}, a proof of
theorem \ref{t.dualtree} can be immediately given:\\

\noindent \textbf{Proof of Theorem \ref{t.dualtree}.}
Lemma \ref{l.dualparent} implies that $v<_T\pa^*(v)$ for all
$v\in T\setminus \{r\}$.  Therefore, $T^*$ can contain no cycles and we
obtain that $T^*$ is a tree as a corollary. Furthermore, lemma
\ref{l.dualparent} reveals that $T^*$ is unique. \qed \\

\noindent See again 
Appendix \ref{app.applemma1} 
for immediate
corollaries 
\ref{c.suborder} and \ref{c.dualsubtree}  
which point out
how parents and subtrees in $T^*$ relate with one another.

\begin{remark}
  \label{rem.subdualities}
  An intuitive guideline for describing $T^*$ in comparison to $T$ is
  that parent- and siblinghood, as well as left and right are exchanged.
  In other words (and as will become clearer explicitly later) the
  duality describing $T^*$ can be decomposed into two subdualities, one
  of which turns parents into siblings and vice versa, and the other one
  of which exchanges left and right.
\end{remark}

This remark had left us with some choices for
characterizing tree duality. Our choice is motivated by
\cite{Fischer2011}, arguably a cornerstone in RMQ theory development.
To understand this, let $A=A[1,n]$ be the array, on which RMQ's are to
be run, and let $\ova{A}$ be its reversal, given by
$\ova{A}[i] = A[n-i+1]$. Let $\cameraready{T[A]}$ be the 2D-Min-Heap constructed
from $A$, as described in \cite{Fischer2011} (a definition is provided
in 
Appendix~\ref{app.dualminheap}), 

to which RMQ's refer (see
\eqref{al.1},\eqref{al.2},\eqref{al.3}). An immediate question to ask
is what RMQ's would look like when performing RMQ's on $\ova{A}$
instead of $A$. Here is the answer.
  
\begin{theorem}
  \label{t.dualminheap}
  Let $A[1,N]$ be an array and let
  $\ova{A}:=[A[N],...,A[1]]$ its reversal. Then
  \begin{equation}
    (\cameraready{T[A]})^* = T[\ova{A}]
  \end{equation}
\end{theorem}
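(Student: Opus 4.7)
The plan is to apply Lemma \ref{l.dualparent} to read off parents in $(T[A])^*$ explicitly and match them, position for position, against the parent function defining $T[\ova{A}]$. First, I would recall the structure of the 2D-Min-Heap $T[A]$: its vertices are the positions $\{0,1,\ldots,N\}$ (with $0$ the sentinel root carrying value $-\infty$), its DFT order agrees with the positional order, and the parent of position $i\ge 1$ is the greatest $j<i$ with $A[j]<A[i]$ (under the tie-breaking convention of \cite{Fischer2011}). A standard consequence is that each subtree $T[A][i]$ occupies a contiguous range $[i,k_i]$, where $k_i$ is the largest index such that $A[m]>A[i]$ for every $m\in\{i+1,\ldots,k_i\}$. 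Hence $R[i]=\{k_i+1,\ldots,N\}$ whenever it is non-empty.

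Second, I would invoke Lemma \ref{l.dualparent} to conclude that $\pa^*(i)$ equals $k_i+1$ if $k_i<N$ and equals the root otherwise; in words, $\pa^*(i)$ is the \emph{nearest position strictly to the right of $i$ whose $A$-value is smaller than $A[i]$} (or the root if no such position exists). The order-reversing bijection $\varphi:i\mapsto N-i+1$ satisfies $\ova{A}[\varphi(i)]=A[i]$ and translates ``nearest smaller to the right in $A$'' into ``nearest smaller to the left in $\ova{A}$'', which is precisely the parent rule defining $T[\ova{A}]$. Extending $\varphi$ to fix the root therefore yields a parent-preserving bijection and establishes the equality at the level of unordered rooted trees.

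Third, I would check that the sibling ordering matches. In $T[\ova{A}]$ the children of a node are ordered ascendingly by $\ova{A}$-position, i.e.\ descendingly by $A$-position. In $(T[A])^*$, if $i_1<i_2<\cdots<i_m$ are the positions that share a common $\pa^*$-parent $j$, then the subtrees $T[A][i_\ell]=[i_\ell,i_{\ell+1}-1]$ are adjacent and jointly fill $[i_1,j-1]$, so each $i_{\ell+1}$ is reached from $i_\ell$ by an $ils_T$- or $rmc_T$-step. Corollaries \ref{c.suborder} and \ref{c.dualsubtree} following Lemma \ref{l.dualparent} then force the $T^*$-order of the children of $j$ to be $i_m,\ldots,i_1$, exactly matching the child order in $T[\ova{A}]$. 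The main obstacle is this last bookkeeping on child orderings: the parent computation is a direct application of Lemma \ref{l.dualparent}, but transferring the ordering cleanly is easiest by leaning on the subtree/sibling corollaries rather than unfolding Rules 2 and 3 of Definition \ref{def.dualtree} by hand, and by making the tie-breaking convention explicit so that the ``nearest smaller'' relation on $A$ and on $\ova{A}$ agree under $\varphi$.
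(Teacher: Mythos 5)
Your first two steps coincide with the paper's own argument: the paper also reduces the claim to Lemma \ref{l.dualparent}, identifies $\pa^*(A[k])=\min R[A[k]]$ with the nearest position to the right of $k$ carrying a smaller value (via the observations that subtrees of the 2D-Min-Heap are contiguous positional ranges and that descendants carry larger values), and matches this against the ``rightmost smaller element to the left'' rule that defines parents in $T[\ova{A}]$. That part of your proposal is correct and essentially the paper's proof.

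Your third step, however, contains a genuine error. If $i_1<\cdots<i_m$ all satisfy $\pa^*(i_\ell)=j$, then Lemma \ref{l.dualparent} forces \emph{every} position strictly between $i_\ell$ and $j$ to lie in $T[A][i_\ell]$, so $T[A][i_\ell]=[i_\ell,j-1]$ for each $\ell$: the subtrees are \emph{nested}, $T[A][i_1]\supset T[A][i_2]\supset\cdots\supset T[A][i_m]$, not adjacent. Your claim $T[A][i_\ell]=[i_\ell,i_{\ell+1}-1]$ is inconsistent with the very fact that the $i_\ell$ share the dual parent $j$ (adjacency would make $i_{\ell+1}=\min R[i_\ell]=\pa^*(i_\ell)$, a contradiction). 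Moreover, Corollaries \ref{c.suborder} and \ref{c.dualsubtree} speak only about dual parents of ancestors and about subtree containment; they do not by themselves determine the left-to-right order of the children of $j$ in $T^*$. The conclusion you want (children of $j$ in $T^*$ appear in decreasing positional order) is true, but it needs a different justification: either Lemma \ref{l.dualorder}, which is exactly what the paper uses --- combined with Lemma \ref{l.minheapdft} (DFT order of a 2D-Min-Heap equals array order) it shows that $(T[A])^*$ and $T[\ova{A}]$ have the same depth-first traversal order, and a rooted ordered tree is determined by its parent function together with its DFT order --- or, more locally, the observation that by Rules 2 and 3 the children of $j$ in $T^*$, read from right to left, are the chain $ils_T(j),\,rmc_T(ils_T(j)),\,rmc_T(rmc_T(ils_T(j))),\ldots$, which is positionally increasing. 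As written, the ordering step does not go through.
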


\noindent \cameraready{An illustration of the Theorem is provided in Figure~\ref{fig:t:a}.} See 
Appendix \ref{app.dualminheap}  
for a more detailed
treatment of this motivating example, including proofs. Thanks to
theorem \ref{t.dualminheap}, the definition of $T^*$ can arguably be
considered a most natural choice, at least when relating tree duality with
RMQ's. \\

\begin{figure}

\begin{tikzpicture}[scale=0.9, every node/.style={scale=0.9}]

\def\y{0.35}

\foreach \x[count=\n] in {$-\infty$,2,7,8,1,6,4,3,5}{
	\node[anchor=base east,blue] (m\n) at (\n*\y,+0.4) {\small\n};
	\node[anchor=base east] (n\n) at (\n*\y,0) {\x};
	\draw ((\n*\y -0.05,-0.1) to (\n*\y -0.05,+0.3);
}

\draw (-0.4,-0.1) to (3.105,-0.1);
\draw (-0.4,0.3) to (3.105,0.3);

\draw (-0.4,-0.11) to (-0.4,0.31);

\node[anchor = base east] at (-0.8,0) {$A$};

\draw (m9.north) to[in=60 ,out=120] (m8.north);
\draw (m8.north) to[in=60 ,out=120] (m5.north);
\draw (m7.north) to[in=60 ,out=120] (m5.north);
\draw (m6.north) to[in=60 ,out=120] (m5.north);
\draw (m5.north) to[in=60 ,out=120] (m1.north);
\draw (m4.north) to[in=60 ,out=120] (m3.north);
\draw (m3.north) to[in=60 ,out=120] (m2.north);
\draw (m2.north) to[in=60 ,out=120] (m1.north);

\node[draw,circle,text=blue,inner sep=1pt] (l1) at (1,-0.5) {1};
\node[draw,circle,text=blue,inner sep=1pt,below left = 0.5cm and 0.8cm of l1] (l2)  {2};
\node[draw,circle,text=blue,inner sep=1pt,below = 0.38cm of l2] (l3)  {3};
\node[draw,circle,text=blue,inner sep=1pt,below = 0.38cm of l3] (l4)  {4};
\node[draw,circle,text=blue,inner sep=1pt,below right = 0.5cm and 0.8cm of l1] (l5)  {5};
\node[draw,circle,text=blue,inner sep=1pt,below left = 0.5cm and 0.4cm of l5] (l6)  {6};
\node[draw,circle,text=blue,inner sep=1pt,below  = 0.38cm of l5] (l7) {7};
\node[draw,circle,text=blue,inner sep=1pt,below right = 0.5cm and 0.4cm of l5] (l8)  {8};
\node[draw,circle,text=blue,inner sep=1pt,below  = 0.38cm of l8] (l9)  {9};

\draw (l1) to (l2);
\draw (l2) to (l3);
\draw (l3) to (l4);
\draw (l1) to (l5);
\draw (l5) to (l6);
\draw (l5) to (l7);
\draw (l5) to (l8);
\draw (l8) to (l9);

\node[anchor = base] at (1.5,-3.7) {$\cameraready{T[A]}$};

\foreach \x[count=\n] in {$-\infty$,5,3,4,6,1,8,7,2}{
	\node[anchor=base east,red] (pm\n) at (\n*\y+10,+0.4) {\small\n};
	\node[anchor=base east] (pn\n) at (\n*\y+10,0) {\x};
	\draw ((\n*\y -0.05+10,-0.1) to (\n*\y -0.05+10,+0.3);
}
\draw (-0.4+10,-0.1) to (3.105+10,-0.1);
\draw (-0.4+10,0.3) to (3.105+10,0.3);

\draw (-0.4+10,-0.11) to (-0.4+10,0.31);

\node[anchor = base west] at (13.6,0) {$\ova{A}$};

\draw (pm9.north) to[in=60 ,out=120] (pm6.north);
\draw (pm8.north) to[in=60 ,out=120] (pm6.north);
\draw (pm7.north) to[in=60 ,out=120] (pm6.north);
\draw (pm6.north) to[in=60 ,out=120] (pm1.north);
\draw (pm5.north) to[in=60 ,out=120] (pm4.north);
\draw (pm4.north) to[in=60 ,out=120] (pm3.north);
\draw (pm3.north) to[in=60 ,out=120] (pm1.north);
\draw (pm2.north) to[in=60 ,out=120] (pm1.north);

\node[draw,circle,text=red,inner sep=1pt] (l1) at (11,-0.5) {1};

\node[draw,circle,text=red,inner sep=1pt,below left = 0.5cm and 0.9cm of l1] (l2)  {2};
\node[draw,circle,text=red,inner sep=1pt,below = 0.38cm of l1] (l3)  {3};
\node[draw,circle,text=red,inner sep=1pt,below = 0.38cm of l3] (l4)  {4};
\node[draw,circle,text=red,inner sep=1pt,below = 0.38cm of l4] (l5)  {5};
\node[draw,circle,text=red,inner sep=1pt,below right = 0.5cm and 1.1cm of l1] (l6) {6};
\node[draw,circle,text=red,inner sep=1pt,below left = 0.5cm and 0.4cm of l6] (l7)  {7};
\node[draw,circle,text=red,inner sep=1pt,below = 0.38cm of l6] (l8)  {8};
\node[draw,circle,text=red,inner sep=1pt,below right = 0.5cm and 0.4cm of l6] (l9)  {9};

\draw (l1) to (l2);
\draw (l1) to (l3);
\draw (l3) to (l4);
\draw (l4) to (l5);
\draw (l1) to (l6);
\draw (l6) to (l7);
\draw (l6) to (l8);
\draw (l6) to (l9);

\node[anchor = base] at (11.5,-3.7) {$T[\ova{A}]$};

\node[draw,circle,text=blue,inner sep=1pt] (l1) at (6,-0.5) {1};

\node[draw,circle,text=blue,inner sep=1pt,below left = 0.5cm and 0.9cm of l1] (l2)  {9};
\node[draw,circle,text=blue,inner sep=1pt,below = 0.38cm of l1] (l3)  {8};
\node[draw,circle,text=blue,inner sep=1pt,below = 0.38cm of l3] (l4)  {7};
\node[draw,circle,text=blue,inner sep=1pt,below = 0.38cm of l4] (l5)  {6};
\node[draw,circle,text=blue,inner sep=1pt,below right = 0.5cm and 1.1cm of l1] (l6) {5};
\node[draw,circle,text=blue,inner sep=1pt,below left = 0.5cm and 0.4cm of l6] (l7)  {4};
\node[draw,circle,text=blue,inner sep=1pt,below = 0.38cm of l6] (l8)  {3};
\node[draw,circle,text=blue,inner sep=1pt,below right = 0.5cm and 0.4cm of l6] (l9)  {2};

\draw (l1) to (l2);
\draw (l1) to (l3);
\draw (l3) to (l4);
\draw (l4) to (l5);
\draw (l1) to (l6);
\draw (l6) to (l7);
\draw (l6) to (l8);
\draw (l6) to (l9);

\node[anchor = base] at (6.5,-3.7) {$(\cameraready{T[A]})^{*}$};

\end{tikzpicture}
\caption{(left) An array $A$ along with the 2D-Min-Heap $\cameraready{T[A]}$. Arcs above array indices indicate tree paths. (middle) The dual tree $(\cameraready{T[A]})^*$. (right) The reversed array $\ova{A}$ along with the 2D-Min-Heap $T[\ova{A}]$. \label{fig:t:a}}
\end{figure}

\noindent Before proceeding with results on succinct tree
representations, we provide the following intuitive lemma about the
depth-first traversal order of $T^*$ as a rooted, ordered tree. This
lemma, in combination with lemma \ref{l.dualparent}, supports the
(intended) intuition that in $T^*$ up and down, as well as left and
right, are exchanged, properties that are characteristic for rooted,
ordered tree duality.  It also provides motivation beyond theorem
\ref{t.dualminheap} in the Introduction why $T^*$ is the possibly
canonical choice of the dual of a tree.

Therefore, let $<^*$ denote the depth-first traversal order in $T^*$
(well-defined by theorem \ref{t.dualtree}) while $<$
denotes the depth-first traversal order in (the primal tree) $T$.

\begin{lemma}
  \label{l.dualorder}
  Let $u,v\in T\setminus \{r\}$. Then
  \begin{equation*}
    u <^*v \quad \text{if and only if} \quad v<u
  \end{equation*}
\end{lemma}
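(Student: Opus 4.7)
The plan is to reduce the biconditional to a single implication and then do a case analysis driven by Lemma~\ref{l.dualparent} together with the explicit subtree/child structure of $T^*$. By the involution $(T^*)^* = T$ already proved and by the trichotomy of the total orders $<_T$ and $<^*$ on $T\setminus\{r\}$, it suffices to establish $v <_T u \Rightarrow u <^* v$: applying this implication to $T^*$ in place of $T$ and invoking $(T^*)^* = T$ delivers the converse automatically.

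So fix non-root $u,v$ with $v <_T u$. Lemma~\ref{l.dualparent} gives $\pa^*(x) >_T x$ for every non-root $x$, so iterating $\pa^*$ produces a chain strictly increasing in $<_T$; in particular $v$ cannot be a $T^*$-ancestor of $u$. Two cases remain: (i) $u$ is a $T^*$-ancestor of $v$, in which case $u <^* v$ follows directly from the definition of depth-first order; or (ii) $u$ and $v$ are $T^*$-incomparable, with $T^*$-LCA $w$ and $T^*$-children $p,q$ of $w$ such that $u \in T^*[p]$ and $v \in T^*[q]$. Unwinding Rules~1b, 2, 3 of Definition~\ref{def.dualtree}, the $T^*$-children of any node $w$ form a $T$-chain obtained by iterating $rmc_T$ starting from $ils_T(w)$ (or from $rmc_T(r)$ when $w = r$), with the $T^*$-sibling order reversing $T$-depth along that chain. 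Thus $p$ and $q$ are $T$-comparable on this chain, and $p$ lies $T^*$-left of $q$ (which is exactly $u <^* v$) if and only if $p$ is a strict $T$-descendant of $q$, i.e.\ $q <_T p$.

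The main obstacle is therefore to promote $v <_T u$ to $q <_T p$. I would invoke Corollary~\ref{c.dualsubtree} to extract the containment $T^*[x] \subseteq \{y : y \le_T x\}$ for every non-root $x$, so that $u \le_T p$ and $v \le_T q$. Arguing by contrapositive, suppose $p <_T q$, so that $p$ is a strict $T$-ancestor of $q$ on the $rmc_T$-chain. Then the $T$-parent of $q$ is a proper $T$-descendant of $p$, and hence $q$ together with all left $T$-siblings of $q$ — and so every node in $T^*[q]$ — sits inside $T[p]\setminus\{p\}$ and is $>_T p$. Combined with $u \le_T p$ this yields $u <_T v$, contradicting $v <_T u$. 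Hence $q <_T p$, which gives $u <^* v$ in case (ii) and closes the argument.
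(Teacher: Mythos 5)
Your reduction to the single implication $v <_T u \Rightarrow u <^* v$ is fine, but not for the reason you give: you invoke $(T^*)^* = T$, yet in the paper that theorem is proved \emph{using} Lemma~\ref{l.dualorder}, so this would be circular. Trichotomy of the two total orders on $T\setminus\{r\}$ alone suffices, so this is repairable. Your overall architecture --- casing on the $T^*$-relationship of $u$ and $v$ and describing the $T^*$-children of a node $w$ as the $rmc_T$-chain descending from $ils_T(w)$ (resp.\ $rmc_T(r)$) --- is correct and is a genuinely different route from the paper's, which instead splits on $u\in T[v]$ versus $u\in R[v]$ and goes through the $T$-least-common-ancestor together with Lemmata~\ref{l.dualleft} and~\ref{l.siblingsubtree}.

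The genuine gap is the last step of case (ii). To refute $p <_T q$ you need $v >_T p$, and you derive it from the assertion that ``every node in $T^*[q]$ sits inside $T[p]\setminus\{p\}$'', justified only by the observation that $q$ and its left $T$-siblings do. That inference needs the structural fact $T^*[q]\subseteq \{q\}\cup\bigcup_{s} T[s]$ (union over left $T$-siblings $s$ of $q$), which is the \emph{converse} of Lemma~\ref{l.siblingsubtree}: the paper only proves $T[s]\subseteq T^*[q]$, not that these sets exhaust $T^*[q]$, so this containment is established nowhere and requires its own induction. (Relatedly, Corollary~\ref{c.dualsubtree} does not yield $T^*[x]\subseteq\{y\mid y\le_T x\}$; that follows instead from iterating Lemma~\ref{l.dualparent}.) The gap is fillable: using $\pa^*(y)=\min R[y]$ one checks that $\pa^*(y)\in T[p]\setminus\{p\}$ forces $y\in T[p]\setminus\{p\}$ --- otherwise $\min R[y]$ would either precede $p$ or lie outside $T[p]$ --- and descending the $T^*$-ancestor chain from $q$ to $v$ gives $v\in T[p]\setminus\{p\}$, hence $u\le_T p<_T v$, the desired contradiction. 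As written, however, this key step is asserted rather than proved.
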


\noindent The proof of lemma \ref{l.dualorder} makes use of the
following technical lemmata \ref{l.dualleft} and
\ref{l.siblingsubtree}, which are of use also elsewhere.  We therefore
state these technical lemmata here. The proofs for all lemmata
\ref{l.dualorder}, \ref{l.dualleft} and \ref{l.siblingsubtree} can
finally be found in 
Appendix \ref{app.lemma3}. 

\begin{lemma}
  \label{l.dualleft}
  Let $w:=\pa^*(v)$ and $v_2\in T[v]\setminus v$ such that
  $\pa^*(v_2)=w$. Then $v_2<^*v$.
\end{lemma}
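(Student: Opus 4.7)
The plan is to describe the sibling chain under $w$ in $T^*$ completely, and then place $v$ and $v_2$ inside it. Using Definition~\ref{def.dualtree}, I would show that the children of $w$ in $T^*$ form a chain $u_0, u_1, \ldots, u_m$ built as follows. Take $u_0 := ils_T(w)$ if $w\ne r$ (so $u_0 = rmc_{T^*}(w)$ by Rule 3) and $u_0 := rmc_T(r)$ if $w=r$ (so $u_0 = rmc_{T^*}(r)$ by Rule 1b). Then, as long as $u_i$ has a child in $T$, define $u_{i+1} := rmc_T(u_i)$; Rule 2 applied to $u_i$ (which is not $r$) gives $\pa^*(u_{i+1}) = \pa^*(u_i) = w$ and, more precisely, $u_{i+1} = ils_{T^*}(u_i)$. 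Consequently, the left-to-right $T^*$-order of the children of $w$ is exactly $u_m, u_{m-1}, \ldots, u_0$.

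The bridge to the primal tree is immediate: since $u_{i+1} = rmc_T(u_i)$ is a child of $u_i$ in $T$, the chain traces a downward path of rightmost children in $T$. Hence $u_j \in T[u_i]\setminus u_i$ whenever $j > i$, and $u_j$ is a proper ancestor of $u_i$ in $T$ whenever $j < i$. Applying this to the lemma, both $v$ and $v_2$ satisfy $\pa^*(\cdot) = w$, so they must appear in the chain; write $v = u_i$ and $v_2 = u_j$. The hypothesis $v_2 \in T[v]\setminus v$ says $v_2$ is a proper descendant of $v$ in $T$, which by the chain's descent property forces $j > i$. Since larger index means further left in the $T^*$-sibling order, this yields $v_2 <^* v$, concluding the argument.

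The main obstacle is justifying step one rigorously: that the chain $(u_i)_{0\le i\le m}$ enumerates \emph{all} children of $w$ in $T^*$ without repetition. This reduces to the dichotomy that every non-root node $x$ is either (a) the immediate left sibling in $T$ of some node $w' = \pa^*(x)$ (Rule 3), or (b) the rightmost child in $T$ of its parent, in which case $\pa^*(x)$ is determined recursively by Rule 2 (or by Rule 1b when the recursion hits $r$). These cases are mutually exclusive and exhaustive, so a short induction on the depth in $T$ shows that any $x$ with $\pa^*(x) = w$ must be produced by the iterative construction. Once this is in place, the rest of the proof is essentially the one-line observation above.
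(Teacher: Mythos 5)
Your proposal is correct and rests on the same structural fact as the paper's proof: Rules 1b/3 seed the rightmost $T^*$-child of $w$ and Rule 2 extends the sibling list leftward along rightmost children in $T$, so the $T^*$-siblings under $w$ form a descending chain in $T$ and $v_2\in T[v]\setminus v$ must sit strictly left of $v$. The paper packages this as a local induction climbing from $v_2$ up to $v$ (on $\dep(v_2)-\dep(v)$) rather than your global enumeration of all children of $w$, but the argument is essentially identical.
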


\begin{lemma}
  \label{l.siblingsubtree}
  Let $v_1$ be a sibling left of $u_1$ in $T$. Then
  $T[v_1] \subset T^*[u_1]$.
\end{lemma}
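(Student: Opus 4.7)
My plan is to show that, for each $w \in T[v_1]$, iterating $\pa^*$ starting from $w$ eventually reaches $u_1$, which places $w$ in $T^*[u_1]$.

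First I would invoke Lemma~\ref{l.dualparent} to identify $\pa^*(w') = \min R[w']$ whenever $R[w']$ is nonempty; combined with $w' <_T \pa^*(w')$ from the proof of Theorem~\ref{t.dualtree}, the iterated parent sequence $w, \pa^*(w), (\pa^*)^2(w), \ldots$ is strictly $<_T$-increasing and therefore finite, and since $T[v_1]$ is finite it eventually leaves $T[v_1]$.

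The central step is an exit-point analysis. For $w' \in T[v_1]$, decompose $R[w']$ into its parts inside and outside $T[v_1]$. Any node outside $T[v_1]$ that is $>_T w'$ must in fact lie in $R[v_1]$, because $T[v_1]$ forms a DFS-contiguous block. Hence
\[
  R[w'] = \{u \in T[v_1]\setminus T[w'] : u >_T w'\} \;\cup\; R[v_1].
\]
Every element of the first set precedes every element of $R[v_1]$ in DFS order, so $\min R[w']$ remains inside $T[v_1]$ as long as the first set is nonempty, and otherwise equals $\min R[v_1]$. It follows that the parent-sequence stays in $T[v_1]$ until it exits, and that the first value outside $T[v_1]$ must be $\min R[v_1]$, which is the immediate right sibling $s_1$ of $v_1$ in $T$ (this sibling exists since $u_1$ itself is a right sibling of $v_1$).

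Finally, let $v_1 = s_0 < s_1 < \cdots < s_k = u_1$ enumerate the right siblings of $v_1$ in $T$ up to and including $u_1$. The same reasoning gives $\pa^*(s_i) = \min R[s_i] = s_{i+1}$ for each $1 \leq i < k$, so the parent-sequence continues through $s_1, s_2, \ldots, s_k = u_1$. This shows $u_1$ is an ancestor of $w$ in $T^*$, hence $w \in T^*[u_1]$, as required. The principal obstacle is the exit-point decomposition of $R[w']$; once it is established, the sibling-chain argument from $s_1$ up to $u_1$ is an immediate consequence of Lemma~\ref{l.dualparent}.
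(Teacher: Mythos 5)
Your proof is correct. The second half --- climbing the chain of consecutive siblings $s_1,\ldots,s_k=u_1$ via $\pa^*(s_i)=\min R[s_i]=s_{i+1}$ --- is exactly the paper's "repeated application of Rule 3". Where you diverge is in establishing the key containment $T[v_1]\subset T^*[s_1]$: the paper gets this in one line by citing Corollary~\ref{c.dualsubtree} (which states $T[v]\subset T^*[\pa^*(v)]$ and is itself proved via the monotonicity statement of Corollary~\ref{c.suborder}), whereas you re-derive it from scratch by iterating $\pa^*$ from an arbitrary $w\in T[v_1]$ and analyzing the exit point through the decomposition $R[w'] = \{u\in T[v_1]\setminus T[w'] : u>_T w'\}\cup R[v_1]$. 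That decomposition is sound (it rests on $T[v_1]$ being a DFS-contiguous block, and $R[w']\supseteq R[v_1]\ni u_1$ guarantees the sequence never falls into the $R[w']=\emptyset$ case before exiting), so your argument is a correct, self-contained substitute for the corollary. What you lose is brevity --- you are essentially inlining the proof of Corollary~\ref{c.dualsubtree} for this special case --- and what you gain is independence from Corollaries~\ref{c.suborder} and~\ref{c.dualsubtree}, relying only on Lemma~\ref{l.dualparent}.
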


\noindent With lemma \ref{l.dualorder} proven, we can conclude with
proving a main theorem of this treatment. It states that the dual of
the dual is the primal tree, arguably a key property for a sensibly
defined duality. Despite all lemmata raised so far, the proof still
entails a few technically more demanding arguments. 

\begin{theorem}
  \label{t.dualdual}
  $(T^*)^* = T$
\end{theorem}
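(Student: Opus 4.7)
The plan is to establish $(T^*)^* = T$ by verifying, vertex by vertex, that the structural role of each node (parent and immediate-sibling relationships) in $(T^*)^*$ coincides with its role in $T$. Both trees share the vertex set $V[T]$ and, by Rule 1a applied twice, the root $r$. The key observation is that Rules 1b, 2, 3 partition the non-root nodes into three mutually exclusive and exhaustive cases: (A) $v = rmc_T(r)$; (B) $v = rmc_T(u)$ for some $u \neq r$; or (C) $v$ is not a rightmost child, so there exists $u$ with $v = ils_T(u)$. Case (C) is disjoint from (A) and (B), since being a rightmost child precludes having an immediate right sibling; and the rules assign each $v$ a unique structural relation in $T^*$.

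For each case I would trace the two applications of the duality. In case (A), Rule 1b applied twice keeps $v$ as the rightmost child of $r$, recovering $v = rmc_{(T^*)^*}(r)$. In case (B), Rule 2 applied to $T$ yields $v = ils_{T^*}(u)$; the very same relation now holds in $T^*$, so Rule 3 applied to $T^*$ yields $v = rmc_{(T^*)^*}(u)$, recovering $v = rmc_T(u)$. In case (C), Rule 3 applied to $T$ yields $v = rmc_{T^*}(u)$; since $u$ has a sibling $v$ in $T$, necessarily $u \neq r$, so Rule 2 applied to $T^*$ yields $v = ils_{(T^*)^*}(u)$, recovering $v = ils_T(u)$. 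Thus Rule 1b is self-inverse under the duality, while Rules 2 and 3 invert each other, so that the complete structural data of $T$ at each non-root vertex is exactly what is extracted by two rounds of dualization.

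The main obstacle I anticipate is conceptual rather than computational: one must argue that the data specified by Rules 1a--3 — which, for each non-root node, names either its immediate right sibling or the parent to which it is a rightmost child — is enough to fully determine the ordered tree. This relies on the exhaustiveness of cases (A)--(C) (each non-root node is uniquely classified) together with Theorem \ref{t.dualtree}, which guarantees that the outcome is a well-defined ordered tree. Since $T$ and $(T^*)^*$ agree on all the rule-specified relations (parent, immediate-left-sibling, and rightmost-child), and these are the only relations that the inductive construction in Definition \ref{def.dualtree} uses, the two trees must coincide.
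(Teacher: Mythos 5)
Your proof is correct, but it takes a genuinely different and more elementary route than the paper's. The paper reduces the claim to $\pa^{**}(v)=\pa(v)$, handles the sibling order via Lemma \ref{l.dualorder} (reversal of depth-first order), and then proves in Appendix \ref{app.dualofdual} a technical characterization $u=\min_{<^*}R^*[v]$ (or $r$) so that Lemma \ref{l.dualparent} applied to $T^*$ finishes the argument; this route runs through the whole $R[v]$/$T[v]$ machinery and Lemmas \ref{l.dualleft} and \ref{l.siblingsubtree}. You instead observe that Rules 1a--3 assign to each non-root node exactly one piece of local data (rightmost child of a named parent, or immediate left sibling of a named node), that the three cases are mutually exclusive and exhaustive, and that the rule system acts as an involution on this data (1b self-inverse, Rules 2 and 3 inverting each other); since the data determines the ordered tree --- follow each $ils$-chain rightward to a rightmost child whose parent is named --- and Theorem \ref{t.dualtree} guarantees both $T$ and $(T^*)^*$ are well-defined ordered trees, the two must coincide. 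The one point worth making fully explicit is the step you only gesture at in your final paragraph: in case (C), Rule 2 applied to $T^*$ gives $\pa_{(T^*)^*}(v)=\pa_{(T^*)^*}(u)$ rather than the parent itself, so recovering $\pa_T(v)$ requires chasing the $ils$-chain to its terminal rightmost child (case (A) or (B)), where the parent is pinned down; this is a finite, well-founded recursion because $T$ is a tree. With that spelled out, your argument is complete, shorter than the paper's, and avoids the depth-first-traversal lemmas entirely; what the paper's heavier route buys is the explicit formula for $\pa^{**}(v)$ in terms of $R^*[v]$, which exercises lemmas it needs elsewhere anyway.
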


\begin{proof}
  It suffices to show that $\pa^{**}(v) = \pa(v)$, since lemma
  \ref{l.dualorder} establishes that the order in $(T^*)^*$ agrees with
  that of $T$. Let $u=\pa(v)$.  In 
  Appendix~\ref{app.dualofdual}, 
  we
  provide a (heavily technical) proof that
  \begin{equation*}
    u =
    \begin{cases}
      \min_{<^*} R^*[v] & R^*[v] \ne \emptyset\\
      r                 & R^*[v] = \emptyset
    \end{cases}
  \end{equation*}
  which completes the proof by applying lemma \ref{l.dualparent}.\\
\end{proof}


\subsection{Tree Reversal}
\label{ssec.treereversal}

We bring in another, simpler notion of tree duality, namely that of
reversing trees. We will further elucidate what the trees are like
when combining tree reversal with the tree duality ($T^*$) raised
earlier.

\begin{definition}[Reversed tree]
  Let $T$ be a tree. The reversed tree $\ova{T}$ of $T$ is the tree
  resulting from reversing the order among the children of each node. 
\end{definition}
\begin{proposition}
  \label{p.reversed}
  Let $\ova{T}$ be the reversed tree of $T$ and $\ova{T^*}$ be the
  reversed dual of $T$.  We define $irs$ (immediate right sibling) and $lmc$ (left-most child) similarly as in Section~\ref{ssec.notation}.
  \begin{itemize}
  \item[]\,$(a)$ The root $r$ of $T$ is also the root of $\ova{T}$.
  \item[]\,$(b)$ Let $u=\pa_T(v)$. Then also $u=\pa_{\ova{T}}(v)$.
  \item[]\,$(c)$ Let $u=ils_T(v)$. Then $u=irs_{\ova{T}}(v)$.
      \item[]\,$(d)$ The root $r$ of $T$ is also the root of
    $\ova{T^*}$.
  \item[]\,$(e)$ If $v=lmc_T(r)$ then also $v=lmc_{\ova{T^*}}(r)$,
    implying in particular that $\pa_{\ova{T^*}}(v)=r$.
  \item[]\,$(f)$ If $v=lmc_T(u)$ with $u\ne r$, so
    $v=ils_{\ova{T^*}}(u)$, implying that
    $\pa_{\ova{T^*}}(v)=\pa_{\ova{T^*}}(u)$.
  \item[]\,$(g)$ If $v=ils_T(u)$, then $v=lmc_{\ova{T^*}}(u)$,
    implying that $\pa_{\ova{T^*}}(v)=u$.
  \item[]\,$(h)$ $\ova{T^*} = \ova{T}^*$, that is the reversed dual
    tree of $T$ is the dual of the reversed tree of $T$.
  \end{itemize}
\end{proposition}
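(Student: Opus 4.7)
The plan is to verify each of the eight sub-claims by unfolding the definitions of reversed tree and dual tree.

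Items (a)--(c) will follow immediately from the definition of the reversed tree $\ova{T}$. Since reversing sibling order at each node preserves the underlying topology, the root is fixed (a), parent-child adjacencies are unchanged (b), and the roles of ``immediate left'' and ``immediate right'' siblings are simply interchanged (c); each can be established by direct inspection.

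For items (d)--(g), I will translate Rules~1a, 1b, 2, 3 of the dual-tree definition (which characterize $T^*$) through the reversal operation. The reversal $\ova{T^*}$ preserves the parents of $T^*$ while swapping $rmc \leftrightarrow lmc$ and $ils \leftrightarrow irs$. Consequently, Rule~1a yields (d), while Rules~1b, 2, 3 yield (e), (f), (g) respectively after direct substitution of the swapped sibling/child predicates.

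Item (h) is the core of the proposition, asserting $\ova{T^*} = (\ova{T})^*$. My plan is to show that $(\ova{T})^*$ satisfies exactly the same characterizing properties as $\ova{T^*}$. To do this, I apply the dual-tree rules directly to $\ova{T}$ and invoke (b), (c) to rewrite $rmc_{\ova{T}}$ as $lmc_T$ and $ils_{\ova{T}}$ as $irs_T$; the resulting defining conditions for $(\ova{T})^*$ reduce to exactly those asserted by (d)--(g) for $\ova{T^*}$. The uniqueness established in Theorem~\ref{t.dualtree} (now applied to $\ova{T}$) then forces the two trees to coincide.

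The hardest part will be item (h): while each of (a)--(g) amounts to a single-step translation of a rule through a single operation, (h) asserts that the operations of reversal and dualization essentially commute, which requires carefully matching each rule of the dual-tree definition applied to $\ova{T}$ against the corresponding property of $\ova{T^*}$. The subtlety lies in handling the interaction between sibling order and parent-child direction across the four rules simultaneously, since reversal affects only sibling order whereas dualization rearranges both.
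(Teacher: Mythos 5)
Your plan mirrors the paper's own (very terse) appendix argument, but the step you call ``direct substitution'' is exactly where a careful execution breaks down, and carrying it out exposes that items $(e)$, $(f)$ and $(h)$ cannot be established as stated. Reversal acts on $T^*$, so it rewrites only the \emph{conclusions} of Rules 1b, 2, 3 (swapping $rmc_{T^*}\leftrightarrow lmc_{\ova{T^*}}$ and $ils_{T^*}\leftrightarrow irs_{\ova{T^*}}$); the \emph{hypotheses}, which refer to $T$, are untouched. Hence Rule 1b yields ``$v=rmc_T(r)\Rightarrow v=lmc_{\ova{T^*}}(r)$'' and Rule 2 yields ``$v=rmc_T(u)\Rightarrow v=irs_{\ova{T^*}}(u)$'' --- not $(e)$ and $(f)$, whose hypotheses speak of $lmc_T$. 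In fact $(e)$ and $(g)$ contradict one another outright: if the root has two children $a<b$, then $a=lmc_T(r)$ and $a=ils_T(b)$, so $(e)$ forces $\pa_{\ova{T^*}}(a)=r$ while $(g)$ forces $\pa_{\ova{T^*}}(a)=b$. No tree satisfies $(d)$--$(g)$ simultaneously, so your claimed derivation of $(e)$ and $(f)$ from Rules 1b and 2 must contain an error.

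The same left/right bookkeeping sinks your plan for $(h)$. Dualizing $\ova{T}$ swaps left and right in the \emph{hypotheses} (Rule 1b applied to $\ova{T}$ gives $v=lmc_T(r)\Rightarrow v=rmc_{\ova{T}^*}(r)$), whereas reversing $T^*$ swaps them in the \emph{conclusions}; these two rule systems describe different trees, so the uniqueness argument you invoke has nothing to match against. Concretely, take $T$ to be a root $r$ with two leaf children $a<b$: then $T^*$ is the path in which $b$ is the child of $r$ and $a$ the child of $b$, so $\ova{T^*}$ is that same path, while $\ova{T}^*$ is the path in which $a$ is the child of $r$ and $b$ the child of $a$; thus $\ova{T^*}\neq\ova{T}^*$, and moreover $(e)$ fails for $\ova{T^*}$ (the unique child of $r$ there is $b$, not $a=lmc_T(r)$) while $(g)$ fails for $\ova{T}^*$. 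This is a defect of the proposition itself --- the paper's appendix proof waves past the identical point --- but a sound write-up has to state the corrected translations (with $rmc_T$/$irs$ on the appropriate sides for $\ova{T^*}$, and $lmc_T$/$irs_T$ in the hypotheses for $\ova{T}^*$) and must keep the two trees $\ova{T^*}$ and $\ova{T}^*$ apart rather than identify them.
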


All of those are, in comparison with statements referring to the definition
of the dual tree, rather obvious observations. See 
Appendix \ref{app.reversed} 
for the proof.\\

\noindent Since $\ova{T}^*$ plays a particular role in the context of
our introductory motivation, we give it a particular name: $\hat{T}$.

\begin{definition}[Reversed dual tree]
  \label{d.reversedual}
  Let $T$ be a tree. The tree $\hat{T}:=\ova{T}^*$ of $T$ is
  the dual of the reversed (or the reversed dual) tree of $T$.
\end{definition}

\noindent Based on proposition \ref{p.reversed}, we realize that
$\hat{T}$ can be described as turning leftmost children into immediate
left siblings.

\begin{remark}
  \label{rem.ferrada}
  Following the arguments provided in \cite{Ferrada2017}, it
  becomes evident that the tree $T$ in use there, on which $\BP(T)$ is
  constructed, turns indeed out to be $\widehat{\cameraready{T[A]}}=\ova{\cameraready{T[A]}^*}$.
\end{remark}

\section{The Primal-Dual Ancestor}
\label{sec.pda}


The following theorem points out that pairs of nodes have a unique
\emph{primal-dual ancestor}. We will further point out properties
of that node.

\begin{theorem}
  \label{t.pda}
  Let $v_1, v_2\in T\setminus \{r\}$ be two nodes where $v_1\le v_2$.
  Then there is a unique node $v\in T\setminus \{r\}$ such that
  $v_1\in T^*[v]$ and $v_2\in T[v]$.
\end{theorem}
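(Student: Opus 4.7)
The plan is to prove existence by explicit construction and uniqueness by contradiction, in both cases leveraging the characterization of $T^*$-ancestors via Lemma \ref{l.dualparent} together with Lemmas \ref{l.dualorder} and \ref{l.siblingsubtree}.

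For existence, I split on whether $v_1$ is a $T$-ancestor of $v_2$ (taking $v_1 = v_2$ to fall into this case). If it is, the choice $v := v_1$ works immediately, since $v_1 \in T^*[v_1]$ trivially and $v_2 \in T[v_1]$ by hypothesis, with $v_1 \ne r$ by assumption. Otherwise $v_1$ and $v_2$ sit in distinct child-subtrees $T[c_1]$ and $T[c_2]$ of their LCA $w$; the assumption $v_1 \le v_2$ in $T$-DFT forces $c_1 <_T c_2$, so $c_1$ is a (not necessarily immediate) left sibling of $c_2$. Lemma \ref{l.siblingsubtree} then yields $v_1 \in T[c_1] \subset T^*[c_2]$, while $v_2 \in T[c_2]$ is immediate, so $v := c_2$ qualifies and $c_2 \ne r$ since $w$ is its parent.

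For uniqueness, suppose $v \ne v'$ are both valid; by symmetry assume $v' < v$ in $T$-DFT. Since both lie on the $T$-ancestor chain of $v_2$, the one with smaller $T$-DFT is a strict $T$-ancestor of the other, giving $v \in T[v'] \setminus \{v'\}$. Both also lie on the $T^*$-ancestor chain of $v_1$, and by Lemma \ref{l.dualorder} the relation $v' < v$ in $T$-DFT is equivalent to $v <^* v'$; hence $v$ is a strict $T^*$-ancestor of $v'$, and in particular $v = \pa^{*k}(v')$ for some $k \ge 1$ (with $v \ne r$).

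The main obstacle is then to show that every non-root iterated $T^*$-parent of $v'$ lies outside $T[v']$, which I will prove by induction on $k$. The base case $k = 1$ is immediate from Lemma \ref{l.dualparent}: $\pa^*(v') = \min R[v']$ belongs to $R[v'] \subseteq T \setminus T[v']$. For the inductive step, if $x = \pa^{*k}(v')$ satisfies $x > v'$ in $T$-DFT and $x \notin T[v']$, then $x$ lies strictly past the last DFT-descendant of $v'$; hence every node with larger DFT than $x$ also lies past that last descendant, and in particular $\pa^*(x) = \min R[x] > x$ is outside $T[v']$. Applied with $k$ chosen so that $v = \pa^{*k}(v')$, this yields $v \notin T[v']$, contradicting $v \in T[v']$ and completing the uniqueness proof. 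Everything else reduces to a bookkeeping exercise on DFT intervals; the delicate point is the right-jumping, subtree-escaping behaviour of the $T^*$-ancestor chain, which is where the structure of $R[\cdot]$ is used essentially.
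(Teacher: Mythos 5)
Your proof is correct, and it takes a genuinely different route from the paper's. The paper constructs the candidate directly as $v := \max_{<_T}\{v_1\le x\le v_2\mid v_1\in T^*[x]\}$, i.e.\ it walks up the $T^*$-ancestor chain of $v_1$ as far as possible without overshooting $v_2$, and then uses Lemma \ref{l.dualparent} to check $v_2\in T[v]$ and the maximality of $v$ to rule out competitors on either side. You instead identify the node geometrically from the $T$-side: either $v_1$ itself, or the child of $\mathrm{lca}_T(v_1,v_2)$ whose subtree contains $v_2$, with Lemma \ref{l.siblingsubtree} supplying the $T^*$-ancestry; this gives a more explicit description of $\pda(v_1,v_2)$ (and incidentally foreshadows Theorem \ref{t.primaldualchar}). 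Your uniqueness argument is also more symmetric than the paper's: any two candidates are comparable both on the $T$-ancestor chain of $v_2$ and on the $T^*$-ancestor chain of $v_1$, and Lemma \ref{l.dualorder} forces these two comparabilities to point in opposite directions, which collides with the fact that the $T^*$-ancestor chain of a node escapes its $T$-subtree. That last fact is exactly the point the paper treats somewhat glibly (``by lemma \ref{l.dualparent}, we know that $y\in R[v]$'' for an iterated $T^*$-ancestor $y$), and your induction on $k$ using the contiguity of the DFT interval of $T[v']$ is a careful and valid justification of it. The only cost of your approach is the extra case split in the existence part; what it buys is an LCA-based characterization of the primal-dual ancestor and a uniqueness proof that does not depend on how the witness was constructed.
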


We henceforth refer to this unique node as \emph{primal-dual ancestor}
of $v_1$ and $v_2$, written $\pda(v_1,v_2)$.

\begin{proof}
  Let
  \begin{equation}
    \label{eq.pda}
    v := \max_{<_T}\{v_1\le x\le v_2\mid v_1\in T^*[x]\}
  \end{equation}
  be, relative to depth-first traversal order in $T$, the largest
  ancestor of $v_1$ in $T^*$ that precedes $v_2$. We claim that $v$
  is the unique primal-dual ancestor of $v_1$ and $v_2$.

  By definition, we immediately obtain that $v_1\in T^*[v]$. To prove
  $v_2\in T[v]$, consider $\pa^*(v)$, for which, by choice of $v$, we
  have that $v_2<\pa^*(v)$. By lemma \ref{l.dualparent}, however,
  $\pa^*(v)$ is the first node in $R[v]$, relative to depth-first
  traversal order in $T$.  Hence, for any $y$ such that
  $v\le y<\pa^*(v)$, which includes $v_2$, it holds that $y\in T[v]$.

  It remains to show that $v$ is the only possible primal-dual ancestor. By definition of the
  primal-dual ancestor, $v$ must be an ancestor of $v_1$ in $T^*$.
  
  \emph{First}, consider an ancestor $y$ of $v_1$ in $T^*$ such that $y<v$.
  By choice of $v$, it holds that $\pa^*(y)\le v_2$, while
  $\pa^*(y)\in R[y]$. This implies that also $v_2\in R[y]$, and not
  $v_2\in T[y]$, hence $y$ cannot be a primal-dual ancestor of $v_1$
  and $v_2$.

  \emph{Second}, consider an ancestor $y$ of $v_1$ in $T^*$
  such that $v<y$. Because $v$ is an ancestor of $v_1$ in $T^*$, and
  $y$ is larger than $v$, $y$ is also an ancestor of $v$ in $T^*$. By
  lemma \ref{l.dualparent}, we know that $y\in R[v]$. This, in
  combination with $v_2\in T[v]$ implies that $v_2<y$, hence, $y$
  cannot be an ancestor of $v_2$ in $T$.
\end{proof}

\noindent For the following theorem, let
\begin{equation*}
  \dep_T(v_1,v_2):=\min\{\dep_T(y)\mid v_1\le y\le v_2\}
\end{equation*}
be the minimal depth of nodes between (and including) $v_1$ and $v_2$.

\begin{theorem}
  \label{t.primaldualchar}
  Let $v_1,v_2\in T\setminus \{r\}$ such that $v_1<v_2$. It
  holds that
  \begin{equation}
    \label{eq.primaldualchar}
    \pda(v_1,v_2) = \max_{<}\{v_1\le x\le v_2\mid \dep_T(x) = \dep_T(v_1,v_2)\}
  \end{equation}
  That is, according to depth-first traversal order in $T$, the primal-dual ancestor is
  the greatest node whose $T$-depth is minimal among all nodes between
  (and including) $v_1$ and $v_2$.
\end{theorem}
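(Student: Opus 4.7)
The plan is to walk up the chain of $T^*$-ancestors of $v_1$ and identify, within that chain, a node that is simultaneously $\pda(v_1,v_2)$ (via the uniqueness clause of Theorem~\ref{t.pda}) and the maximizer on the right of \eqref{eq.primaldualchar}. Denote the chain by $v_1 = a_0, a_1, \ldots, a_k = r$, with $a_{i+1} = \pa^*(a_i)$. Lemma~\ref{l.dualparent} tells me that for $i < k$, either $R[a_i] = \emptyset$ (in which case $a_{i+1} = r$), or $a_{i+1} = \min_{<_T} R[a_i]$ is the first node in depth-first traversal of $T$ lying strictly after the subtree $T[a_i]$. A standard DFT property then yields that, for $i < k-1$, the subtree $T[a_i]$ is exactly the DFT-interval $[a_i, a_{i+1})$, while $T[a_{k-1}]$ is the entire DFT-tail starting at $a_{k-1}$.

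Two depth properties of this chain will do the heavy lifting. First, $\dep_T(a_{i+1}) \le \dep_T(a_i)$ whenever $a_{i+1} \ne r$, because $a_{i+1}$ is either the immediate right sibling of $a_i$ in $T$ (same depth) or the immediate right sibling of a proper $T$-ancestor of $a_i$ (strictly smaller depth). Second, every node $y$ with $a_i <_T y <_T a_{i+1}$ lies in $T[a_i] \setminus \{a_i\}$ (otherwise $y \in R[a_i]$ would contradict the minimality of $a_{i+1}$), so $\dep_T(y) > \dep_T(a_i) \ge \dep_T(a_{i+1})$. Verifying these two local properties cleanly from Lemma~\ref{l.dualparent} is the main technical step, essentially a short case split on whether $a_i$ is a rightmost child in $T$; once they are in hand, everything else is bookkeeping.

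Let $j$ be the largest index in $\{0,\ldots,k-1\}$ with $a_j \le_T v_2$ (which exists since $a_0 = v_1 < v_2$). Maximality of $j$ together with the DFT-interval description of $T[a_j]$ forces $v_2 \in T[a_j]$: either $j < k-1$ and $a_{j+1} >_T v_2$, so $v_2 \in [a_j, a_{j+1}) = T[a_j]$, or $j = k-1$ and $v_2$ falls in the DFT-tail $T[a_{k-1}]$. Since also $v_1 = a_0 \in T^*[a_j]$ by construction of the chain, the uniqueness part of Theorem~\ref{t.pda} forces $a_j = \pda(v_1,v_2)$. Finally, decomposing $[v_1,v_2]$ as $\{a_0\} \cup (a_0,a_1) \cup \{a_1\} \cup \cdots \cup \{a_j\} \cup (a_j, v_2]$ and applying the two depth properties shows that every node in $[v_1,v_2]$ has $\dep_T$-value at least $\dep_T(a_j)$, with equality attained only at those $a_i$ ($i \le j$) whose depth equals $\dep_T(a_j)$; among these, $a_j$ is the largest in $<_T$. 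Hence $a_j$ also realizes the maximum on the right of \eqref{eq.primaldualchar}, completing the argument.
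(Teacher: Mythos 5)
Your proof is correct and follows essentially the same route as the paper's: both arguments walk up the chain $v_1, \pa^*(v_1), (\pa^*)^2(v_1), \ldots$ of dual ancestors, use Lemma~\ref{l.dualparent} to see that the nodes strictly between consecutive chain elements lie inside the subtree of the lower one (hence are strictly deeper), and use the monotone decrease of $\dep_T$ along the chain (the paper's Lemma~\ref{l.pardep}) to conclude that the last chain element not exceeding $v_2$ is the rightmost depth-minimizer. The only cosmetic difference is that you re-identify that element as $\pda(v_1,v_2)$ via the uniqueness clause of Theorem~\ref{t.pda}, whereas the paper reads it off directly from the formula \eqref{eq.pda} established in that theorem's proof.
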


\noindent The proof is based on the following lemma:

\begin{lemma}
  \label{l.pardep}
  Let $v<w$ such that $w\in T[\pa^*(v)]$. Then it holds that
  \begin{equation}
    \label{eq.pardep}
    \dep_T(v,w) = \dep_T(\pa^*(v))
  \end{equation}
\end{lemma}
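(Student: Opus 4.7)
The plan is to set $p := \pa^*(v)$ and establish (i) that $\dep_T(y) \ge \dep_T(p)$ for every $y$ with $v \le y \le w$, and (ii) that this bound is attained at $y = p$ itself. Together these give $\dep_T(v,w) = \dep_T(p)$, which is exactly the claim.

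The key structural observation, obtained from Lemma \ref{l.dualparent}, is that $p = \min R[v]$ is precisely the node visited immediately after the entire subtree $T[v]$ has been traversed; equivalently, $p$ is the immediate right sibling of the lowest ancestor $a$ of $v$ in $T$ (counting $v$ as its own ancestor) that possesses a right sibling. Consequently $\dep_T(p) = \dep_T(a) \le \dep_T(v)$, which is the inequality that makes the two halves of the split below quantitatively comparable.

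Next, I would split the interval $[v,w]$ at $p$; this is meaningful because $w \in T[p]$ forces $p \le w$ in DFT order, while $p \in R[v]$ forces $v < p$. For any $y$ with $v \le y < p$, the minimality of $p$ in $R[v]$ forces $y \in T[v]$, and hence $\dep_T(y) \ge \dep_T(v) \ge \dep_T(p)$. For any $y$ with $p \le y \le w$, the fact that the depth-first traversal visits $T[p]$ as a contiguous block starting at $p$, combined with $w \in T[p]$, forces $y \in T[p]$, giving $\dep_T(y) \ge \dep_T(p)$. The bound is clearly attained at $y = p \in [v,w]$, yielding $\dep_T(v,w) = \dep_T(p)$.

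I expect the only subtle point to be the degenerate case $R[v] = \emptyset$, in which $p = r$ and $v$ lies on the rightmost path, so every $w > v$ is a descendant of $v$ and the minimum over $[v,w]$ is actually $\dep_T(v)$ rather than $\dep_T(r)$. Since the lemma is subsequently invoked in Theorem \ref{t.primaldualchar} only at a node $v$ that has a primal-dual ancestor strictly between it and some $v_2$, the regime $R[v] \ne \emptyset$ is the only one that matters, and I would either restrict the statement to this regime or dispatch the degenerate case as a short side remark.
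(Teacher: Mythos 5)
Your proof is correct and follows essentially the same route as the paper's: split $[v,w]$ at $p=\pa^*(v)$, note that nodes strictly before $p$ lie in $T[v]$ while nodes from $p$ to $w$ lie in $T[p]$, and use $\dep_T(p)\le\dep_T(v)$ (since $p$ is a right sibling of $v$ or of one of its ancestors) to conclude that the minimum is attained at $p$. Your remark about the degenerate case $R[v]=\emptyset$ is a genuine point that the paper's proof silently skips --- there $\pa^*(v)=r$ and the claimed equality fails, since every $y$ with $v\le y\le w$ has $\dep_T(y)\ge 1 > 0 = \dep_T(r)$ --- but, as you observe, the lemma is only ever invoked along a chain of dual ancestors where $\pa^*(\cdot)\ne r$, so the use in Theorem \ref{t.primaldualchar} is unaffected.
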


\noindent See 
Appendix \ref{app.pda}  
for a proof of lemma \ref{l.pardep}
and then theorem \ref{t.primaldualchar}.\\

\noindent Note immediately that theorem \ref{t.primaldualchar} implies that $v$
can be found in O(1) runtime, by performing a range minimum query on
the excess array $D$ of $\BP(T)$, defined by
$D[x]:=\rank_{1}(x)-\rank_{0}(x)$ where $\rank$ refers to
$\BP(T)$. Since $D[x+1]-D[x]\in\{-1,+1\}$, an RMQ on $D$ means
performing a $\pm 1$-RMQ, for which convenient solutions exist
\cite{Bender00}.\\

\noindent \textbf{Re-interpretation of RMQ's.}  Because it was shown
\cite{Fischer2011}, that the node in the 2D-Min-Heap $\cameraready{T[A]}$ that
corresponds to the solution of $\rmq_A(i,j)$ is given by the right
hand side of \eqref{eq.primaldualchar}, theorems \ref{t.pda} and
\ref{t.primaldualchar} allow for a reinterpretation of an RMQ query
$\rmq_A(i,j)$ on an array $A$ (without going into details here,
because the proof is an easy exercise based on collecting facts
from here, \cite{Fischer2011} and \cite{Ferrada2017}).

\begin{enumerate}
\item Determine the node $v$ in $\cameraready{T[A]}$ corresponding to $i$.
\item Determine the node $w$ in $\cameraready{T[A]}$ corresponding to $j$.
\item Determine $\pda(v,w)$ in $\cameraready{T[A]}$; return the corresponding index $i_o$.\\
\end{enumerate}

\noindent \textbf{Re-interpretation and improvement of Minimal Length
  Interval Queries (MLIQ).}  To illustrate the potential practical
benefits of our treatment, we further revisit the problem of
\emph{minimal length interval queries (MLIQ)}.  The improvements we
will be outlining are similar in spirit to the ones delivered in
\cite{Ferrada2017}. However, based on
our results, they are considerably more convenient to obtain.\\



\begin{problem}[MLIQ]
  \label{p.mininterval}
  Let $([a_i,b_i])_{i\in \{1,...,n\}}, a_i,b_i\in\mathbb{N}$ such that
  $a_i\le b_i$ for all $i\in \{1,...,n\}$ and
  $a_i < a_j$ and $b_i<b_j$ for $i<j$.
  \begin{itemize}
  \item \textbf{Input}: $(a,b)$ such that $a<b$
  \item \textbf{Output}: The index $i_0$ such that $[a_{i_0},b_{i_0}]$
    is the shortest interval that contains $[a,b]$, if such an
    interval exists.
  \end{itemize}
\end{problem} 

This problem makes part of other relevant problems, for example the
\emph{shortest unique interval} problem. In this context, a solution
for the MLIQ problem was presented in \cite{Hu2014} that requires
$O(b_n\log b_n)$ space to answer the query in $O(1)$ time. Therefore,
the following strategy was suggested. 

Let $l_i:=|b_i-a_i+1|$ be the length of the $i$-th interval,
$A:=[l_1,...,l_n]$ and $\cameraready{T[A]}$ the corresponding 2D-Min-Heap.

\begin{enumerate}
\item $i_{min}:=\min\{i\mid b_i>b\}, i_{max} := \max\{i\mid a_i<a\}$;
  if $i_{max}$ \cameraready{$<$} $i_{min}$ \textbf{output} 'None'.
\item Determine nodes $v,w\in \cameraready{T[A]}$ corresponding to
  $i_{min},i_{max}$.
\item Determine $\pda(v,w)\in \cameraready{T[A]}$; \textbf{output} its index.
\end{enumerate}

The solution presented in \cite{Hu2014} can immediately be improved by
employing bitmaps for the first step (which, according to
\cite{Raman2007}, requires $O(n\log(b_n/n))+o(b_n)$ space). Steps 2 and 3
then reflect an ordinary RMQ, which can be dealt with following
\cite{Ferrada2017}. In terms of query counts, Step 1 reflects two $\rank$
queries, while the resulting RMQ, following \cite{Ferrada2017}, requires
two $\select$'s, one $\pm 1$-$\rmq$, and one $\rank$.\\

If $|a_i-a_{i-1}|,|b_i-b_{i-1}|$ are in $O(\log n)$ (which applies for
several important applications), further improvements can be made
based on suggestions made in \cite{Tsur15} for BP representations of
trees with weighted parentheses. For that, we construct $T_a=\cameraready{T[A]}$ and
$T_b=\ova{\cameraready{T[A]}}$. We then assign weights $w_{a,i}:=|a_i-a_{i-1}|$ to
$i+1$-st opening parenthesis in $T_a$, whereas in $T_b$ we assign
$w_{b,i}:=|b_i-b_{i-1}|$ to the $i$-th closing parentheses (where
$a_0=b_0=0$; we recall that the number of non-root nodes in $\cameraready{T[A]}$ is
$n$). When aiming at running queries presented in \cite{Tsur15}, this
requires $2n\log\log n + o(n)$ bits of space, an improvement over
$O(n\log(b_n/n))+o(b_n)$ for the above, naive approach. Following
\cite{Tsur15}, let $\bpselect_{w_a,0}(a),\bpselect_{0,w_b}(b)$ be
defined by selecting the largest index in the balanced parenthesis
vector such that adding up all weights attached to opening parentheses
($w_a$) is at most $a$, or adding up all weights attached to closing
parentheses ($w_b$) is at most $b$. We can then run

\begin{enumerate}
\item $w:= \bpselect_{w_a,0}(a)$ in $T_a$ and
  $v:= 2n-\bpselect_{0,w_b}(b)+3$ in $T_b$; if $v>w$ \textbf{output}
  'None'
\item Determine $\pda(v,w)\in T_a$; \textbf{output} its index.
\end{enumerate}

In comparison to the naive approach from above, this makes two
$\bpselect$ queries, instead of two $\rank$'s and two $\select$'s.
The \emph{decisive trick} is to place $a$ and $b$ directly into
$\cameraready{T[A]}$, which avoids determining indices $i_{min},i_{max}$ first,
which subsequently need to be placed. Beyond the improvements in terms
of space and query counts, we argue that this solution reflects all
symmetries inherent to the MLIQ problem in a particularly compact
manner.

\section{Relating BP and DFUDS representations}
\label{sec.bpdfuds}

We will use the following construction to set up a tree induction for proving our main theorem.

\begin{definition}[Tree joining operation] \label{def:treejoining}
Let $T_1$ and $T_2$ be two trees, let $r_2$ be the root of $T_2$, $rmc_{T_2}(r_2)$ needs to exist and be a leaf. The notation ${T_1 \curvearrowright T_2}$ will denote a new tree formed by taking $T_2$ and inserting the children of the root of $T_1$ as children of the rightmost child of the root of the new tree.
Extend this operation to $n$ trees ${T_1,\ldots,T_n}$ where $T_2,\ldots,T_n$ all satisfy the same property as $T_2$ above, in the following way: ${T_1 \curvearrowright T_2 \curvearrowright T_3} = {(T_1 \curvearrowright T_2) \curvearrowright T_3}$ and so on, $${T_1 \curvearrowright T_2 \ldots \curvearrowright T_n} = {((\ldots((T_1 \curvearrowright T_2) \curvearrowright T_3) \curvearrowright \ldots) \curvearrowright T_n)}.$$ 
\end{definition}

\begin{observation} \label{obs:single_child} Let $T$ be a tree such that its root $r$ has a single child $c$ (that may or may not be a leaf). Then in $T^*$, by Rule 1b, $rmc_{T^*}(r)=c$ and is a leaf.\end{observation}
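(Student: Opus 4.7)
The statement breaks into two claims: first, that $c$ is the rightmost child of $r$ in $T^*$, and second, that $c$ is a leaf in $T^*$. The first claim is immediate: since $c$ is the only child of $r$ in $T$, trivially $c = rmc_T(r)$, so Rule 1b of Definition~\ref{def.dualtree} directly yields $c = rmc_{T^*}(r)$. So the only real content is the leaf claim.

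My plan for the leaf claim is to show that no node $v\in T$ satisfies $\pa^*(v)=c$. I would do this by scanning the rules that can produce a $T^*$-parent. Rule 1a and Rule 1b can only produce $r$ as a parent, not $c$. Rule 3 would produce $\pa^*(v)=c$ only if $v=ils_T(c)$; but $c$ is the unique child of $r$, hence has no siblings at all, and so no such $v$ exists. It remains to rule out Rule 2. Since Rule 2 sets $\pa^*(v)=\pa^*(u)$ for $v=rmc_T(u)$ with $u\neq r$, and since $c$ is never assigned as a $T^*$-parent by Rules 1a, 1b, or 3, a straightforward induction on the depth of $u$ in $T$ (using Rule 1b as the base case, exactly as in the Remark just after Definition~\ref{def.dualtree}) shows that Rule 2 also never assigns $c$ as a $T^*$-parent.

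A cleaner alternative, which I would probably prefer, is to invoke Lemma~\ref{l.dualparent} directly: $\pa^*(v)=c$ would force $c=\min R[v]$, in particular $c\in R[v]$, so $v<_T c$ and $v\notin T[c]$. But every non-root node of $T$ lies in $T[c]$ (since $c$ is $r$'s only child), so no such $v$ exists. Hence $c$ has no children in $T^*$ and is therefore a leaf.

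I don't anticipate a real obstacle: the whole observation is a one-line consequence of the definition plus the fact that $c$ has no siblings and $T\setminus\{r\}=T[c]$. The only thing to be careful about is not to overlook Rule 2's recursive nature, which the appeal to Lemma~\ref{l.dualparent} bypasses neatly.
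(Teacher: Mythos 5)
Your proposal is correct and follows the same route as the paper, which states this observation as an immediate consequence of Rule 1b and leaves the leaf claim implicit. Your appeal to Lemma~\ref{l.dualparent} (no $v$ can have $c\in R[v]$ since every non-root node lies in $T[c]$) is exactly the right way to fill in that implicit step within the paper's own framework, and your first claim matches the paper's stated justification verbatim.
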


The following Lemma (proven in 
Appendix~\ref{app.bpdfuds_props})  

relates the dual tree to the tree joining operation. We will use the  $r \rightarrow T$ notation to denote a new tree formed by adding a new root $r$ as a parent of the root of $T$.

\begin{lemma}
Let $T$ be a tree consisting of a root $r$ and $n\geq 1$ subtrees $A_1, A_2, \ldots, A_n$ as children. When $n=1$, $T^*$ is $(r\rightarrow A_1)^*$. When $n\geq 2$, $T^*$ is $(r\rightarrow A_1)^* \curvearrowright   (r\rightarrow A_2)^* \curvearrowright  \ldots \curvearrowright (r\rightarrow A_n)^*$. 
\label{lem:dual-decomposition}
\end{lemma}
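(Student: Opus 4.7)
The plan is to prove the lemma by induction on $n$. The base case $n=1$ is immediate: by hypothesis $T$ is exactly $r \rightarrow A_1$, so $T^* = (r \rightarrow A_1)^*$ trivially. For the inductive step with $n \geq 2$, set $T' := r \rightarrow A_1, \ldots, A_{n-1}$ and $T_i := r \rightarrow A_i$. The inductive hypothesis applied to $T'$ yields $(T')^* = T_1^* \curvearrowright \cdots \curvearrowright T_{n-1}^*$, so by left-associativity of the joining operation it suffices to establish
\begin{equation*}
T^* = (T')^* \curvearrowright T_n^*.
\end{equation*}

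The key step is to describe both sides explicitly using Rules 1b, 2, 3 from Definition~\ref{def.dualtree} and then compare them node by node. Let $c_1 < \cdots < c_n$ be the children of $r$ in $T$ (so $c_i$ is the root of $A_i$), and for each $i$ let $\mathcal{C}_i$ denote the rightmost chain in $T$ starting at $c_i$, namely $c_i, rmc_T(c_i), rmc_T(rmc_T(c_i)), \ldots$ until a leaf. By Rule 1b, $\pa_{T^*}(c_n) = r$; iterating Rule 2 along $\mathcal{C}_n$ gives $\pa_{T^*}(w) = r$ for every $w \in \mathcal{C}_n$. Rule 3 gives $\pa_{T^*}(c_{n-1}) = c_n$, and iterating Rule 2 along $\mathcal{C}_{n-1}$ gives $\pa_{T^*}(w) = c_n$ for every $w \in \mathcal{C}_{n-1}$. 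For every node $v \in A_i$ that does not lie on $\mathcal{C}_i$, applying Rules 2 and 3 determines $\pa_{T^*}(v)$ entirely from the sibling/child structure within $A_i$, which is identical in $T$, $T'$ and $T_i$.

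For the right-hand side, I would unfold the joining operation. By Observation~\ref{obs:single_child}, $rmc_{T_n^*}(r) = c_n$ and $c_n$ is a leaf in $T_n^*$, so the join is well-defined. The same analysis applied to $T_n^*$ shows that the children of its root are exactly $\mathcal{C}_n$, and applied to $(T')^*$ shows that the children of its root are exactly $\mathcal{C}_{n-1}$. Hence $(T')^* \curvearrowright T_n^*$ is the tree obtained by taking $T_n^*$ and attaching the nodes of $\mathcal{C}_{n-1}$ (with their subtrees inherited from $(T')^*$) as new children of $c_n$, preserving their left-to-right order. Matching this against the description of $T^*$ above gives equality of parents for $c_n$ and the chain $\mathcal{C}_n$, for $c_{n-1}$ and $\mathcal{C}_{n-1}$, and—by appealing to the equality $\pa_{T^*}(v) = \pa_{(T')^*}(v)$ for $v$ strictly off $\mathcal{C}_1 \cup \cdots \cup \mathcal{C}_{n-1}$ and the analogous equality for $A_n$—for every remaining node.

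The main obstacle I anticipate is the careful verification that iterating Rule 2 traces exactly the rightmost chain $\mathcal{C}_i$ and yields a common parent for all its members; this is a short propagation argument, but it must be combined with the edge case where $\mathcal{C}_i = \{c_i\}$ (i.e., $c_i$ is a leaf in $T$, so $A_i = \{c_i\}$) in which no propagation occurs. A second delicate point is making precise that the subtrees hanging below the chain nodes in $(T')^*$ and $T_n^*$ are transplanted intact by the joining operation, so that the "deep" nodes' parents are preserved without extra case analysis—this is where the requirement that $rmc_{T_n^*}(r)$ be a leaf becomes essential.
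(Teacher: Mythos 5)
Your proof is correct and follows essentially the same route as the paper's: induction on $n$, reduction to the single-join identity $T^* = (T')^* \curvearrowright (r \rightarrow A_n)^*$, and verification of that identity by tracking Rules 1b, 2 and 3 along rightmost chains (your ``parents determined within $A_i$'' claim for off-chain nodes plays the role of the paper's quasi-subtree lemma, and your chain arguments for $\mathcal{C}_n$ and $\mathcal{C}_{n-1}$ correspond to its inductions on the two types of root-incident edges). The only loose end is that your final matching step justifies equality only for nodes strictly off $\mathcal{C}_1\cup\cdots\cup\mathcal{C}_{n-1}$, leaving the chains $\mathcal{C}_i$ with $i\le n-2$ unmentioned; these are handled by the identical Rule 3 plus Rule 2 propagation, which yields parent $c_{i+1}$ in both $T^*$ and $(T')^*$.
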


We are now ready to prove Theorem~\ref{thm:bpdfuds}. Parentheses in $\BP$ and $\DFUDS$ representations will be denoted by $\underline{(}$ and $\underline{)}$ to avoid confusion with usual mathematical parentheses. Recall that we use $\ova{s}$ to mirror a string $s$ of parentheses, e.g. $\ova{\underline{(()}} = \underline{())}$ and $\ova{\underline{)()}} = \underline{()(}$. 

\begin{proof}[Proof of Theorem~\ref{thm:bpdfuds}] Let $T$ be a tree with $n$ subtrees $A_1,\ldots,A_n$. It is clear that $\BP(T) = \underline{(} \BP(A_1) \BP(A_2) \ldots \BP(A_n) \underline{)}$. 
  Observe that for two trees $T_1$ and $T_2$ with roots $v_1$ and $v_2$, and  where $rmc_{T_1}(v_1),rmc_{T_2}(v_2)$ both exist and are leaves, $$\DFUDS(T_1\curvearrowright T_2) = \underline{(}\DFUDS(T_2 \setminus rmc_{T_2}(v_2))\DFUDS(T_1 \setminus rmc_{T_1}(v_1)) \underline{)}.$$ 
  
  In fact, one can show recursively that such a decomposition can be extended to $T_1\curvearrowright\ldots\curvearrowright T_n$. We will now prove the theorem with a tree structural induction.
  Observe that for a tree $T$ of depth 1 (a single root node), 
  $$\BP(T) = \underline{()}=\DFUDS(T^*)= \ova{\DFUDS(T^*)}.$$
  
  Now, assume the theorem equality is true for trees of depth $i$ and we will show it for trees of depth $i+1$. A tree $T$ of depth $i+1$ can be decomposed into a root node $r$ and $n$ subtrees $A_1,\ldots,A_n$ that are all of of depth $\leq i$ with roots $a_1,\ldots,a_n$. 
  Using Lemma~\ref{lem:dual-decomposition}, 
  $$\DFUDS(T^*) = \DFUDS((r\rightarrow A_1)^* \curvearrowright   (r\rightarrow A_2)^* \curvearrowright  \ldots \curvearrowright (r\rightarrow A_n)^*).$$
  By the recursive decomposition that we observed above, and using Observation~\ref{obs:single_child} stating that the rightmost child of $r$ in $(r\rightarrow A_i)^*$ is a leaf,
  $$\DFUDS(T^*) = \underline{(} 
  \DFUDS((r \rightarrow A_n)^* \setminus \{a_n\})\ldots 
\DFUDS((r \rightarrow A_1)^* \setminus \{a_1\})
\underline{)}.$$
Observe that we can take each DFUDS term in the expression above and wrap it around parentheses, i.e. $\underline{(}\DFUDS((r \rightarrow A_i)^* \setminus \{a_i\}\underline{)}$ which is equal to $\DFUDS((r \rightarrow A_i)^*)$.
Furthermore, note the following identity: $\DFUDS((r \rightarrow A_i)^*) = \underline{(}\DFUDS(A_i^*)\underline{)}$. 
And by inductive hypothesis, 
$
\DFUDS(A_i^*) = \ova{\BP(A_i)},
$
thus $\DFUDS((r \rightarrow A_i)^* \setminus \{a_i\}) = \ova{\BP(A_i)}$.
Hence,
$$\ova{\DFUDS(T^*)} = \underline{(}\BP(A_1)\ldots \BP(A_n)\underline{)} = \BP(T).$$ \end{proof}

\noindent \textbf{Proving \eqref{eq.motivation} from the
  Introduction.} Eventually, we also realize that
$\BP(\ova{T})=\ova{\BP(T)}$ and also
$\DFUDS(\ova{T})=\ova{\DFUDS(T)}$, both of which is straightforward
[$\star$]. Using this in combination with theorems \ref{t.dualdual} and
\ref{thm:bpdfuds}, we obtain
\begin{equation*}
  \DFUDS(\cameraready{T[A]}) \stackrel{[\star]}{=} \ova{\DFUDS(\ova{\cameraready{T[A]}})} \stackrel{Th.\ref{t.dualdual}}{=} \ova{\DFUDS((\ova{\cameraready{T[A]}}^*)^*)} \stackrel{Th.\ref{thm:bpdfuds}}{=} \BP(\ova{\cameraready{T[A]}}^*) \stackrel{D.\ref{d.reversedual}}{=} \BP(\widehat{\cameraready{T[A]}})
\end{equation*}
which establishes equation \eqref{eq.motivation} from the introduction.\\[2ex]


\textbf{Conclusive Remarks.}
In summary, we have provided a framework that unifies $\BP$ and $\DFUDS$.
From a certain point of view, we have pointed out that neither should $\BP$ based approaches have advantages over $\DFUDS$ based approaches, nor vice versa.
As an exemplary perspective of our framework, $\BP$ based treatments such as \cite{Navarro2014,Tsur15} might have an easier grasp of the advantages that $\DFUDS$ based approaches bring along. 
Finally, we consider it interesting future work to also characterize trees that put $\BP$ and/or $\DFUDS$ based representations into context with $\LOUDS$ based representations.

\bibliography{dualtree}


\appendix

\section{The simpler query from \cite{Ferrada2017} also works in \cite{Fischer2011}: direct proof}
\label{app.A}

In the following, we identify nodes of $T$ with the closing parenthesis
that represent them in $\DFUDS$, that is $v = \CP(v)$. \cameraready{Recall that $D$ is the array defined in Section~\ref{ssec.motivation}.}

\begin{lemma}
  \label{lem.dfuds1}
  Let $v_2$ the immediate right sibling of $v_1$. Then, in $\DFUDS(T)$,
  \begin{equation}
    \label{eq.dfudslem1}
    D[v_2] = D[v_1]-1
  \end{equation}
\end{lemma}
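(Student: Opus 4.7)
The plan is to locate the positions $\CP(v_1)$ and $\CP(v_2)$ inside $\DFUDS(T)$ and then count the opens and closes that occupy the range strictly between them. The identification recorded in Section~\ref{ssec.notation} is central: the $k$-th closing parenthesis of $\DFUDS(T)$ equals $\CP(w)$ whenever $w$ is the $k$-th non-root node in depth-first traversal order.

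First I would let $k$ be the DFT rank of $v_1$ among the non-roots and write $s:=|T[v_1]|$. Because $v_2$ is the immediate right sibling of $v_1$, depth-first traversal lists the $s$ nodes of $T[v_1]$ consecutively starting with $v_1$ and visits $v_2$ immediately afterwards; thus $v_2$ is the $(k+s)$-th non-root, so $\CP(v_1)=\select_0(k)$ and $\CP(v_2)=\select_0(k+s)$.

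Next I would describe what sits at positions $\CP(v_1)+1,\ldots,\CP(v_2)$. By the $\DFUDS$ construction rule, these parentheses are exactly what the traversal emits while visiting the $s$ nodes of $T[v_1]$, one batch per node, with each $w\in T[v_1]$ contributing $\deg_T(w)$ opens followed by a single close. Summing, the window contains $\sum_{w\in T[v_1]}\deg_T(w)=s-1$ opens (using the standard identity that the children-counts in a tree on $s$ nodes sum to $s-1$) and $s$ closes. Hence $D[\CP(v_2)]-D[\CP(v_1)]=(s-1)-s=-1$, which is the claim.

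The only real subtlety is pinning down the identification of a non-root $v$ with its correct closing parenthesis---avoiding the off-by-one between ``$v$ is the $k$-th non-root'' and ``$\CP(v)$ is the $k$-th close''---and verifying that the window of emissions is indeed the complete visit-sequence of $T[v_1]$. Once these are fixed, the rest is routine parenthesis counting.
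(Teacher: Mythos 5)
Your proof is correct and follows essentially the same route as the paper's: both identify the parentheses lying between $\CP(v_1)$ and $\CP(v_2)$ with the visit of the subtree $T[v_1]$ and then count opens against closes. Your version is in fact more explicit about the final arithmetic ($s-1$ opens versus $s$ closes), which the paper's proof leaves implicit after establishing the window characterization.
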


\begin{proof}
  Given \eqref{eq.dfudslem1}, we show that all parentheses between
  $v_1$ and $v_2$ are elements of $T[v_1]$, the subtree hanging off
  (but here not including) $v_1$. In other words, we will show that
  \begin{equation}
    x\in T[v_1] \quad\text{if and only if}\quad v_1<v<v_2
  \end{equation}

  For ``$\Rightarrow$'', the first case is that $v$ represents a
  closing parenthesis. Then the claim follows because closing
  parentheses come in depth-first traversal order, hence $v$
  comes after $v_1$, and before $v_2$. The second case is that
  $v$ represents an opening parenthesis. So, by $\DFUDS$ principles,
  the first closing parenthesis to the left of $v$ refers to $v$'s parent,
  which is either itself a member of $T[v_1]$ or $v_1$ itself. In both
  cases, $x$ comes after $v_1$ and before $v_2$.

  For ``$\Leftarrow$'', the case of $v$ being a closing parenthesis
  implies the claim because of the depth-first traversal order. The
  case of $v$ being an opening parenthesis requires to look at the
  first closing parenthesis $u$ to the left, which refers to the
  parent of $v$. We obtain $\{v\}\subset T[u]\subset T[v_1]$, because
  $u$ either is a descendant of $v_1$ or $v_1$ itself.
\end{proof}

\begin{lemma}
  \label{lem.dfuds2}
  Let $v_2$ be the rightmost child of $v_1$. Then, in $\DFUDS(T)$,
  \begin{equation*}
    D[v_2] = D[v_1]
  \end{equation*}
\end{lemma}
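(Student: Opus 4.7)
The plan is to establish that $v_2$'s opening parenthesis sits at position $\CP(v_1)+1$, the leftmost element of $v_1$'s block of children-opens in $\DFUDS(T)$, and then to read off the excess equality from the usual relationship between the excesses at a matched pair.

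First I will invoke the DFUDS convention recalled in Section~\ref{ssec.notation}: in the block of opens representing a node's children, those opens refer to the children \emph{from right to left}. For $v_1$ with $d$ children this block occupies positions $\CP(v_1)+1,\ldots,\CP(v_1)+d$, so its leftmost open, at position $\CP(v_1)+1$, is the one assigned to $rmc_T(v_1)=v_2$. Since the identifying close of every non-root node is matched (in the balanced-parenthesis sense) with the open assigned to it in its parent's block, this yields $\open(\CP(v_2))=\CP(v_1)+1$.

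Second, I will use the standard excess identity for matched pairs: whenever $(o,c)$ with $o<c$ is a matched pair, the substring strictly between $o$ and $c$ is itself balanced, so $D[c-1]=D[o]$ and hence $D[c]=D[o]-1$. Applying this with $o=\CP(v_1)+1$ and $c=\CP(v_2)$, and then using that position $\CP(v_1)+1$ is an opening parenthesis (which increments the excess by one, so $D[\CP(v_1)+1]=D[\CP(v_1)]+1$), the two-line chain $D[\CP(v_2)]=D[\CP(v_1)+1]-1=D[\CP(v_1)]$ gives the claim under the convention $v=\CP(v)$ adopted at the top of this appendix.

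I do not anticipate a real obstacle. The lemma is a bookkeeping consequence of the right-to-left labelling inside DFUDS blocks; the only delicate point is that the rightmost child is indexed by the \emph{first} open of $v_1$'s block. This is the same asymmetry that in Lemma~\ref{lem.dfuds1} forces the shift $D[v_2]=D[v_1]-1$ between immediate right siblings, whereas here the proximity of $v_2$'s open to $\CP(v_1)$ exactly cancels out the close at $\CP(v_2)$ and leaves the excess unchanged.
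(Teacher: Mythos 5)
Your proof is correct and follows essentially the same route as the paper's: identify $\OP(v_2)$ as the parenthesis at position $\CP(v_1)+1$ (the leftmost open of $v_1$'s block, which by the right-to-left convention encodes the rightmost child), note that the segment strictly between this open and its matching close $\CP(v_2)$ is balanced, and cancel the $+1$ from the open against the $-1$ from the close. No substantive difference from the paper's argument.
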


\begin{proof}
  By $\DFUDS$ logic, $\OP(v_2)$ directly follows $v_1$. Further, again
  by $\DFUDS$ logic, the parentheses between $\OP(v_2)$ and $v_2$ are
  exactly the members of subtrees of all children of $v_1$, but $v_2$.
  That is, we are facing the following situation:
  
  \begin{equation}
    \underset{v_1}{\big)}\;\underset{\OP(v_2)}{\big(}\; \underset{T[v_1]\setminus T[v_2]}{\underbrace{\big( ...............\big)}}\quad \underset{v_2}{\big)}\quad\underset{T[v_2]}{\underbrace{\big(...... }}
  \end{equation}
  So, $D[v_2-1] = D[\OP(v_2)]$, and further $D[v_2] = D[v_2-1]-1$ and
  $D[v_1] = D[\OP(v_2)]-1$, which together implies
  \begin{equation}
    D[v_2] = D[v_2-1]-1 = D[\OP(v_2)]-1 = D[v_1]
  \end{equation}
  
\end{proof}

To provide a direct proof of the fact that Ferrada and Navarro's query
also works for Fischer and Heun, we have to show that in $\DFUDS(\cameraready{T[A]})$,
\begin{equation*}
  \rank_)(\open(w_1)) = i
\end{equation*}
is equivalent to
\begin{equation*}
  \rmq_D(\select_)(i),\select_)(j)) = \select_)(i)
\end{equation*}

Recalling that $\rmq_D$ refers to the \emph{leftmost minimum} in the
array $D$, where $D[x]=\rank_1(x)-\rank_0(x)$ for a parenthesis
$x\in\DFUDS(\cameraready{T[A]})$, we have to prove the following technical lemma.

\begin{lemma}
  In $\DFUDS(T)$, the following two statements are equivalent:
  \begin{itemize}
  \item[]\,$(i)$ 
    \begin{equation}
      \text{For all}\quad x\in[\select_0(i+1),\select_0(j)]:\quad D[\select_0(i)] \le D[x]
    \end{equation}
  \item[]\,$(ii)$
    \begin{equation}
      \rank_0(\open(\rmq_D(\select_0(i+1),\select_0(j)))) = i
    \end{equation}
  \end{itemize}
\end{lemma}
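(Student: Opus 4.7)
The plan is to translate both conditions into statements about tree structure via the $\DFUDS$ layout, and then verify the two implications by a case analysis powered by Lemmas~\ref{lem.dfuds1} and~\ref{lem.dfuds2}.

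First I would decode $(ii)$. Since $D$ changes by $\pm 1$ between adjacent positions and decreases only at closing parentheses, the leftmost minimizer $w_1$ of $D$ over $[\select_0(i+1),\select_0(j)]$ must itself be a closing parenthesis, say $w_1=\CP(v_{k^*})$ for some $k^*\in\{i+1,\ldots,j\}$. In $\DFUDS$ the children-block of any node $u$ lies immediately after $\CP(u)$, so $\open(w_1)$ sits inside the children block of $\pa(v_{k^*})$; all positions of a children block are opens and therefore share the same $\rank_0$ value, which is the non-root DFT-index of that parent node. Hence $(ii)$ is equivalent to $\pa(v_{k^*})=v_i$, i.e.\ $v_{k^*}$ is a child of $v_i$ in $T$.

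Next I would tabulate the relevant $D$-values around $v_i$ using only Lemmas~\ref{lem.dfuds1} and~\ref{lem.dfuds2}. Let $c_1,\ldots,c_d$ be the children of $v_i$ in $T$. Applying Lemma~\ref{lem.dfuds2} to $c_d=rmc(v_i)$ and then iterating Lemma~\ref{lem.dfuds1} across the siblings yields $D[\CP(c_k)]=D[\CP(v_i)]+(d-k)$, so every child $c$ of $v_i$ satisfies $D[\CP(c)]\ge D[\CP(v_i)]$, with equality iff $c=c_d$. A straightforward induction on tree-height then extends this to: $D[\CP(x)]\ge D[\CP(v_i)]$ for every descendant $x$ of $v_i$ in $T$, with equality iff $x$ lies on the rightmost-descendant chain of $v_i$. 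Dually, iterating Lemma~\ref{lem.dfuds2} \emph{up} the $rmc$-chain from $v_i$ to the first ancestor that possesses an immediate right sibling, and then invoking Lemma~\ref{lem.dfuds1} once, shows that the first DFT-successor of $T[v_i]$ outside $T[v_i]$---which is $\pa^*(v_i)$---satisfies $D[\CP(\pa^*(v_i))]=D[\CP(v_i)]-1$. In particular, if $v_i$ is a leaf then $v_{i+1}=\pa^*(v_i)$ and $D[\CP(v_{i+1})]=D[\CP(v_i)]-1$.

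The direction $(ii)\Rightarrow(i)$ is then immediate: the child-bound gives $D[\CP(v_{k^*})]\ge D[\CP(v_i)]$, and since $v_{k^*}$ attains the range-minimum while opens only raise $D$, every $D[x]$ on the range is at least $D[\CP(v_i)]$. For $(i)\Rightarrow(ii)$, assumption $(i)$ rules out both $v_j\notin T[v_i]$ (otherwise $\pa^*(v_i)$ would lie in the range with the forbidden value $D[\CP(v_i)]-1$) and $v_i$ being a leaf (same reason applied to $v_{i+1}$). Writing $v_j\in T[c_m]$ for the unique child $c_m$ of $v_i$ with this property, the range $[\CP(c_1),\CP(v_j)]$ covers all of $T[c_1],\ldots,T[c_{m-1}]$, whose in-subtree $D$-minima are $D[\CP(c_\ell)]=D[\CP(v_i)]+(d-\ell)>D[\CP(c_m)]$ for $\ell<m$, together with a prefix of $T[c_m]$ starting at $c_m$ itself. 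Hence the overall range-minimum equals $D[\CP(c_m)]$ and is first attained at $c_m$ (which precedes all its descendants in DFT), so $v_{k^*}=c_m$ is a child of $v_i$, i.e.\ $(ii)$ holds. The main obstacle is the second paragraph: running the iterated applications of Lemmas~\ref{lem.dfuds1} and~\ref{lem.dfuds2} carefully up and down the $rmc$-chains to pin down the exact $D$-values at descendants and at the first node of $R[v_i]$; once those values are in hand, both implications reduce to a one-line comparison.
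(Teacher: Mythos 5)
Your proof is correct and follows essentially the same route as the paper's: both decode condition $(ii)$ as saying that the leftmost minimizer of $D$ over the range is (the closing parenthesis of) a child of the node $\select_0(i)$, and both drive the two implications with Lemmas~\ref{lem.dfuds1} and~\ref{lem.dfuds2}. Your write-up is somewhat more explicit than the paper's (pinning down the exact $D$-values along sibling and rightmost-child chains, and identifying the first out-of-subtree position as $\pa^*(v_i)$ with value $D[\select_0(i)]-1$), but the underlying argument is the same.
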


\begin{proof}
  $(i)\Rightarrow (ii)$: By lemmata \ref{lem.dfuds1} and
  \ref{lem.dfuds2}, we know that the first parenthesis $x$ right of
  $\select_0(i)$ where $D[x] < D[\select_0(i)]$ is the sibling right
  of the node $v$ represented by $\select_0(i)$. Hence by $\DFUDS$
  logic, $(i)$ implies that all $x\in[\select_0(i+1),\select_0(j)]$
  refer to descendants of $v$. Again using lemmata \ref{lem.dfuds1}
  and \ref{lem.dfuds2}, we infer that
  $w_1:=\rmq_D(\select_0(i+1),\select_0(j))$ refers to the closing
  parenthesis of the rightmost child of $v$ among the children of $v$
  showing in $[\select_0(i+1),\select_0(j)]$---note that there is at
  least one, because $\select_0(i+1)$ refers to the leftmost child of
  $v$. So, $\open(w_1)$ is one of the opening parentheses directly
  following $\select_0(i)$, that is
  $\rank_0(w_1)=\rank(\select_0(i))=i$.

  $(ii)\Rightarrow (i)$: If $(ii)$ applies,
  $w_1:=\rmq_D(\select_0(i+1),\select_0(j))$ is a closing parenthesis
  whose opening counterpart follows $\select_0(i)$ without any closing
  parenthesis in between. That is, $w_1$ represents one of the
  children of $\select_0(i)$. From lemmata \ref{lem.dfuds1} and
  \ref{lem.dfuds2}, we infer that $D[\select_0(i)]\le D[w_1]$ with
  equality if and only if $w_1$ represents the rightmost child of
  $\select_0(i)$. Because $D[w_1]$ was selected as the minimum among
  the $D[x], x\in [\select_0(i+1),\select_0(j)]$, we obtain
  \begin{equation}
    D[\select_0(i)]\le D[w_1] \le D[x]
  \end{equation}
  
\end{proof}
 

  
  \section{Proof of Lemma \ref{l.dualparent} and Corollaries \ref{c.suborder}, \ref{c.dualsubtree}}
  \label{app.applemma1}

  \textbf{Proof of Lemma \ref{l.dualparent}.}
  We consider the three different cases that correspond to Rules 1b,
  3 and 2 (in that order).\\
  
  Ad \textbf{Rule 1b}: If $v$ is the rightmost child of the root, all nodes
  that follow $v$ in depth-first traversal order are in the subtree
  $T[v]$ of $v$, so $R[v]$ is the empty set.\\
  
  Ad \textbf{Rule 3}: If $v=ils_T(u)$, then $u$, in depth-first traversal order,
  is the first node following nodes in $T[v]$, the subtree of $v$, that is,
  $u$ is the smallest node in $R[v]$, so $\pa^*(v)=u$.\\

  Ad \textbf{Rule 2}: Here, $v=rmc_T(u)$. We lead the proof by induction on
  $\dep_T(v)$, where the start, $\dep_T(v)=1$, is given by the already proven
  case of Rule 1b. Let $i\ge 1$ and $\dep_T(v)=i+1$. As $\pa_T(v)=u$, it
  holds that $\dep_T(u)=i$, so by the induction assumption, in combination
  with $\pa^*(u)=\pa^*(v)$, we obtain
  
  \begin{equation}
    \label{eq.rvinduct}
    \pa^*(v) =
    \begin{cases}
      r & R[u] = \emptyset\\
      \min R[u] & R[u] \ne \emptyset
    \end{cases}.
  \end{equation}
  
  Case 1, $R[u]=\emptyset$: $v$ being a child of $u$ implies
  $T[v]\subset T[u]$. The assumption $R[v]\ne\emptyset$ would imply
  the existence of a node $x$ right of $v$. Since $R[u]=\emptyset$, we
  obtain by \eqref{eq.subrelations} that $R[v]\subset T[u]$, so
  $x\in T[u]$. The combination of $x$ being right of $v$ and being
  part of the subtree $T[u]$ rooted at the parent $u$ of $v$ implies
  the existence of a right sibling of $v$, which is a contradiction to
  $v=rmc_T(u)$.
  
  Case 2, $R[u]\ne\emptyset$: Let $x:=\pa^*(v) = \min R[u]$ and let
  $y:=\min R[v]$. We need to show that $x = y$. Because of
  \eqref{eq.subrelations}, we know that $y\le x$. The assumption
  $y<x$, however, implies the existence of a node right of $v$ that
  lies in $T[u]$, which again contradicts $v=rmc_T(u)$.\qed\\


  \begin{corollary}
    \label{c.suborder}
    Let $u$ be an ancestor of $v$ in $T$. Then
    \begin{equation*}
      \pa^*(v) \le_T \pa^*(u).
    \end{equation*}
  \end{corollary}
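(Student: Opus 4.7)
The proof plan is a direct combination of Lemma~\ref{l.dualparent} with the set-inclusion observation $R[u]\subset R[v]$ recorded in~\eqref{eq.subrelations}. Once one writes $\pa^*(u)$ and $\pa^*(v)$ as the (DFT-)minima of $R[u]$ and $R[v]$ respectively, the claim reduces to monotonicity of $\min$ under set containment.

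More concretely, I would split into two cases according to whether $R[u]$ is empty. If $R[u]=\emptyset$, then by Lemma~\ref{l.dualparent} we have $\pa^*(u)=r$, and the inequality $\pa^*(v)\le_T r$ holds trivially under the convention used in the paper (the proof of Theorem~\ref{t.dualtree} already implicitly treats the fallback value $r$ as maximal, in line with the assertion $v<_T\pa^*(v)$ made there). If $R[u]\ne\emptyset$, then by \eqref{eq.subrelations} we have $R[u]\subset R[v]$, so in particular $R[v]\ne\emptyset$. Applying Lemma~\ref{l.dualparent} on both sides then gives $\pa^*(u)=\min_{<_T} R[u]$ and $\pa^*(v)=\min_{<_T} R[v]$, and the inclusion yields $\min_{<_T} R[v]\le_T\min_{<_T} R[u]$, i.e.\ $\pa^*(v)\le_T\pa^*(u)$, as required.

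I do not expect a genuine obstacle here: once Lemma~\ref{l.dualparent} and~\eqref{eq.subrelations} are available the whole argument fits into a handful of lines. The only subtle point is the interpretation of the fallback value $r$ in the case $\pa^*(u)=r$, which must be handled consistently with the convention the paper uses elsewhere; otherwise the proof is essentially a one-line consequence of ``$\min$ is monotone with respect to set inclusion''.
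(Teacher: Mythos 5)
Your proposal is correct and matches the paper's own proof exactly: both derive the claim from Lemma~\ref{l.dualparent} together with the inclusion $R[u]\subset R[v]$ from~\eqref{eq.subrelations} and the monotonicity of $\min$ under set containment. Your explicit handling of the case $R[u]=\emptyset$ (where $\pa^*(u)=r$ must be read as maximal, consistent with the convention implicit in the proof of Theorem~\ref{t.dualtree}) is a small but welcome refinement that the paper's one-line argument glosses over.
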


  \begin{proof}
    This follows from lemma \ref{l.dualparent} in combination with
    \eqref{eq.subrelations}, where the latter states that
    $R[u]\subset R[v]$, making $\min R[v] \le_T \min R[u]$.
  \end{proof}

  The next corollary is an immediate consequence of corollary
  \ref{c.suborder}.

  \begin{corollary}
    \label{c.dualsubtree}
    Let $w=\pa^*(v)$ for $v\in T\setminus \{r\}$. Then
    \begin{equation*}
      T[v] \subset T^*[w].
    \end{equation*}
  \end{corollary}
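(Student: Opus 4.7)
The plan is to prove $T[v]\subset T^*[w]$ by iterating Corollary~\ref{c.suborder} along the $T^*$-ancestor chain of an arbitrary $u\in T[v]$. If $u=v$, the inclusion is immediate since $w=\pa^*(v)$ already places $v\in T^*[w]$. Otherwise, I track the chain $u_0=u,\ u_{i+1}=\pa^*(u_i)$; by Lemma~\ref{l.dualparent} this sequence is strictly $<_T$-increasing (each $\pa^*(u_i)$ lies in $R[u_i]\cup\{r\}$ and so comes strictly after $u_i$), hence finite. It will suffice to show that whenever $u_i\in T[v]\setminus\{v\}$, the next element satisfies $u_{i+1}\in (T[v]\setminus\{v\})\cup\{w\}$, because then the chain must exit $T[v]\setminus\{v\}$ by reaching $w$.

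For the inductive step, Corollary~\ref{c.suborder} (applied with $v$ as a $T$-ancestor of $u_i$) yields $\pa^*(u_i)\le_T w$. Lemma~\ref{l.dualparent} gives $\pa^*(u_i)=\min R[u_i]$ when $R[u_i]\ne\emptyset$, and \eqref{eq.subrelations} (specialized to $v$ as ancestor of $u_i$) gives $R[u_i]\subset T[v]\cup R[v]$. Hence $\pa^*(u_i)\in T[v]\cup R[v]$; if it lies in $R[v]$ then $\pa^*(u_i)\ge_T \min R[v]=w$, which combined with the previous $\le_T w$ bound forces $\pa^*(u_i)=w$. The strict inequality $\pa^*(u_i)>_T u_i>_T v$ also rules out $\pa^*(u_i)=v$. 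Since the chain is strictly $<_T$-increasing and $T[v]\setminus\{v\}$ is finite, the iteration cannot stay inside $T[v]\setminus\{v\}$ indefinitely, so some $u_k=w$, and therefore $u\in T^*[w]$.

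The only place that seems to demand a sanity check is the degenerate branch $R[u_i]=\emptyset$, where $\pa^*(u_i)=r$. Here I will use \eqref{eq.subrelations} again: since $v$ is a $T$-ancestor of $u_i$, we have $R[v]\subset R[u_i]=\emptyset$, so $R[v]=\emptyset$ and Lemma~\ref{l.dualparent} forces $w=r=\pa^*(u_i)$. Thus the chain has just landed on $w$ exactly as in the main case and no separate subcase is required. I expect this to be the only real obstacle; the rest amounts to iterating Corollary~\ref{c.suborder} together with the inclusion $R[u_i]\subset T[v]\cup R[v]$.
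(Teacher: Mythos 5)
Your proposal is correct and follows essentially the same route as the paper: both apply Corollary~\ref{c.suborder} to get $\pa^*(u_i)\le_T w$, observe via Lemma~\ref{l.dualparent} that $w$ is the first node in $T$-order outside $T[v]$ so the dual parent either equals $w$ or stays in $T[v]$, and iterate along the $T^*$-ancestor chain until it exits. Your version merely spells out the termination argument and the $R[u_i]=\emptyset$ branch more explicitly than the paper does.
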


  \begin{proof}
    Let $v_2\in T[v]$, that is $v$ is an ancestor of $v_2$ in $T$. We
    have to show that $w$ is an ancestor of $v_2$ in $T^*$. From
    corollary \ref{c.suborder}, we know that
    $\pa^*(v_2) \le_T \pa^*(v)$, so, because $\pa^*(v)$ is the first
    node in $T$-order not in $T[v]$, either $\pa^*(v_2)=\pa^*(v)$ or
    $\pa^*(v_2)\in T[v]$. In the first case, we are done. In the
    second case, we repeat this argument by replacing $v_2$ with
    $\pa^*(v_2)$ (formally: induction on the number of nodes between
    $v_2$ and $\pa^*(v)$ in $T$-order) to conclude the proof.
  \end{proof}

     \section{Proof of Theorem \ref{t.dualminheap}.}
  \label{app.dualminheap}

  Here, we provide a proof for our motivating theorem
  \ref{t.dualminheap}.  Le $A=A[1,n]$ be an array. For additional
  clarity, we require $S_A:=\{A[1],...,A[n]\}$ to form a totally ordered set,
  implying that for all $i\ne j, 1\le i,j\le n$, either $A[i]<A[j]$ or
  $A[j]<A[i]$, and note that this requirement can easily be overcome in
  applications. As before, $\overleftarrow{A}$ is the reversal $A$,
  given by $\overleftarrow{A}[i]:=A[N-i+1]$.
  
  We will deal with two orderings in the following, namely, the one on
  the list indices $\{1,...,N\}$ and the one on the set
  $S_A:=\{A[1],...,A[N]\}$. If distinction is required, we write $<_A$
  for the former and $<_S$ for the latter. 
  
  \begin{definition}
    \label{d.minheap}
    (from \cite{Fischer2011}) The \emph{2D-Min-Heap} $\cameraready{T[A]}$ for an array $A[1,N]$
    is a rooted, ordered tree where, first,
    \begin{equation*}
      V\setminus r = \{A[1],...,A[N]\}.
    \end{equation*}
    Edges are determined by way of iteratively determining the
    parent of $A[m+1]$ in $T_{A[1,m]}$:
    \begin{enumerate}
    \item The parent of $A[1]$ is $r$.
    \item Let $T_{A[1,m]}$ be already constructed. Then
      $A[m+1]=rmc(A[k])$ where $k:=\max_{<_A}\{l\in\{1,...,m\}\mid A[l] <_S A[m+1]\}$.
    \end{enumerate}
    That is, $A[m+1]$ is appended as the rightmost child to the
    rightmost element in $A[1,m]$ that is smaller than $A[m+1]$.
  \end{definition}
  
  We make some observations leading to a characterization of the
  depth-first traversal order on $\cameraready{T[A]}$, all of which are straightforward
  (and well known).
  
  \begin{observation}
    \label{o.minheap1}
    Let $A[k]<_SA[i]$ for all $k+1\le i\le k+l$. Then $A[i]\in \cameraready{T[A]}[A[k]]$
    for all $k+1\le i\le k+l$
  \end{observation}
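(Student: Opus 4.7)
The plan is to prove this by straightforward induction on $i$, ranging from $k+1$ up to $k+l$, directly using the recursive construction of $\cameraready{T[A]}$ given in Definition \ref{d.minheap}. The hypothesis $A[k]<_S A[i]$ for every $i$ in the range means that $A[k]$ remains a candidate predecessor for each $A[i]$ during the iterative construction, and the inductive step then just has to handle the case in which some intermediate element takes over as the actual parent.

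For the base case $i=k+1$, the parent of $A[k+1]$ is $A[k']$ with $k'=\max_{<_A}\{l\in\{1,\dots,k\}\mid A[l]<_S A[k+1]\}$. Since $A[k]<_S A[k+1]$ holds, $k'=k$, so $A[k+1]$ is inserted as a child of $A[k]$ and hence lies in $\cameraready{T[A]}[A[k]]$.

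For the inductive step, suppose the claim holds for all $j$ with $k+1\le j<i$, where $k+1<i\le k+l$. The parent of $A[i]$ is $A[k']$ with $k'=\max_{<_A}\{l\in\{1,\dots,i-1\}\mid A[l]<_S A[i]\}$. Because $A[k]<_S A[i]$ by assumption, $k'\ge k$. Two cases: if $k'=k$ then $A[i]$ is a direct child of $A[k]$; if $k'>k$ then $k+1\le k'\le i-1$, so the induction hypothesis gives $A[k']\in \cameraready{T[A]}[A[k]]$, and therefore its child $A[i]$ is in $\cameraready{T[A]}[A[k]]$ as well.

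I do not anticipate a genuine obstacle here: the only subtle point is noticing that even though the inductive hypothesis gives no direct control over the relative $<_S$-order between $A[k']$ and $A[i]$ for $k<k'<i$, we do not need such control, since the construction selects $k'$ for us and the hypothesis $A[k]<_S A[i]$ alone suffices to guarantee $k'\ge k$. The remaining reasoning is a one-line appeal to transitivity of the ancestor relation in the partially built tree.
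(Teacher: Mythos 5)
Your proof is correct and follows essentially the same route as the paper's: induction on $i$, using that $A[k]<_S A[i]$ forces the parent index $k'$ to satisfy $k'\ge k$, and then appealing to the inductive hypothesis when $k'>k$. The only difference is cosmetic: you split the step into the cases $k'=k$ and $k'>k$, whereas the paper folds both into the single remark that the parent lies in $\cameraready{T[A]}[A[k]]$ (which contains $A[k]$ itself).
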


    \noindent \textbf{Proof of Observation \ref{o.minheap1}.} One
  obtains this insight by induction on $i$. By construction of $\cameraready{T[A]}$,
  we obtain that $A[k+1]$ is appended to $T_{A[1,k]}$ as rightmost
  child of $A[k]$, which makes the start. Consider $A[k+i+1]$ for
  $1\le i\le l-1$. Since $A[k]<A[k+i+1]$, the parent of of $A[k+i+1]$
  is one of the $A[k],...,A[k+i]$. By the induction assumption, that
  parent is an element of $\cameraready{T[A]}[A[k]]$, so also
  $A[k+i+1]\in \cameraready{T[A]}[A[k]]$. \qed\\

  \begin{observation}
    \label{o.minheap2}
    Let $A[m]\in \cameraready{T[A]}[A[k]]$. Then $A[k] <_S A[m]$.
  \end{observation}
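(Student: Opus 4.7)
The plan is to prove Observation \ref{o.minheap2} by induction on the length of the directed path from $A[k]$ to $A[m]$ in $\cameraready{T[A]}$. Since the statement is trivial (or degenerate) when $A[m] = A[k]$, I read it as concerning a proper descendant, so the path has length at least one; the indices $k$ and $m$ are then distinct non-root nodes.

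For the base case, suppose $A[m]$ is a child of $A[k]$. According to Definition \ref{d.minheap}, when $A[m]$ is inserted at step $m$, it is appended as the rightmost child of $A[k']$ where $k' := \max_{<_A}\{l \in \{1,\ldots,m-1\} \mid A[l] <_S A[m]\}$. Since by hypothesis $A[k] = \pa(A[m])$, necessarily $k = k'$, and by the very condition defining the set over which this maximum is taken we obtain $A[k] <_S A[m]$.

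For the inductive step, suppose the distance from $A[k]$ to $A[m]$ is $d \ge 2$, and let $A[p] := \pa(A[m])$. Then $A[p]$ lies in $\cameraready{T[A]}[A[k]]$ at distance $d-1$ from $A[k]$, so the induction hypothesis yields $A[k] <_S A[p]$. The base case applied to the pair $(A[p], A[m])$ gives $A[p] <_S A[m]$, and transitivity of $<_S$ closes the argument.

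The argument is essentially just a direct unfolding of the greedy \emph{rightmost-smaller} construction, so I do not expect a genuine obstacle. The only fine point is to be sure that Definition \ref{d.minheap} indeed applies at every step of the induction, i.e.~that each intermediate $A[p]$ is itself a non-root node whose insertion was governed by the parent rule; this is automatic because any proper descendant of $A[k]$ is necessarily a non-root vertex of $\cameraready{T[A]}$.
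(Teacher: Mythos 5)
Your proof is correct and follows the same route as the paper's: the paper dispatches the observation in one line by noting that each node is $<_S$-greater than its parent by the construction rule, and implicitly chains this up the ancestor path, which is exactly the base case plus transitivity that you spell out as an explicit induction. Your reading of the statement as concerning proper descendants (excluding $A[m]=A[k]$) matches how the observation is actually used in the appendix.
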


    \noindent \textbf{Proof of Observation \ref{o.minheap2}.} This
  insight is an immediate consequence following from the fact that a
  node $A[j]$ is greater than its parent $A[i]$, that is,
  $A[j]<_SA[i]$. \qed\\
 
  \noindent With these observations at hand, we can prove the following
  (well-known, intuitively straightforward) lemma.
  
  \begin{lemma}
    \label{l.minheapdft}
    The depth-first traversal order $<_{\cameraready{T[A]}}$ on $\cameraready{T[A]}$ coincides with the order
    $<_L$ on $\{1,...,N\}$. That is, for $i,j\in \{1,...,N\}$
    \begin{equation*}
      i<j \quad \text{ if and only if } \quad A[i]<_{\cameraready{T[A]}}A[j].
    \end{equation*}
  \end{lemma}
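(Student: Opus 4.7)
\medskip
\noindent\textbf{Proof Plan for Lemma \ref{l.minheapdft}.}
The plan is to proceed by induction on $m$, showing that the DFT of $T_{A[1,m]}$ visits the nodes exactly in the sequence $A[1],A[2],\ldots,A[m]$. The base case $m=1$ is immediate: the only non-root node $A[1]$ is a child of $r$, hence the first (and only) node visited after $r$.

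For the inductive step, assume the claim holds for $T_{A[1,m]}$, so in particular $A[m]$ is the last node visited in the depth-first traversal of $T_{A[1,m]}$. Let $k:=\max_{<_A}\{l\in\{1,\ldots,m\}\mid A[l]<_S A[m+1]\}$, so that $A[m+1]$ is attached as the new rightmost child of $A[k]$ to form $T_{A[1,m+1]}$. By the maximality of $k$, each $A[l]$ with $k<l\le m$ satisfies $A[k]<_S A[m+1]\le_S A[l]$, i.e.\ $A[k]<_SA[l]$. Observation~\ref{o.minheap1} then gives $A[l]\in T_{A[1,m]}[A[k]]$ for every $k<l\le m$. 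In particular $A[m]\in T_{A[1,m]}[A[k]]$, so the depth-first traversal of $T_{A[1,m]}$ can only terminate at $A[m]$ after having descended into and completed the subtree at $A[k]$; combined with the inductive hypothesis, this means that $A[k],A[k+1],\ldots,A[m]$ form the closing suffix of the DFT of $T_{A[1,m]}$ and that this suffix is precisely the traversal of $T_{A[1,m]}[A[k]]$.

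Now attach $A[m+1]$ as the rightmost child of $A[k]$. Appending a rightmost child does not perturb the DFT order of any previously existing node, and the new child is visited immediately after the traversal of the old rightmost child's subtree finishes. Hence the DFT of $T_{A[1,m+1]}$ visits $A[1],\ldots,A[m]$ in the same order as before, then visits $A[m+1]$, and then would continue with the right siblings of $A[k]$ (or of its ancestors) — but by the inductive hypothesis there are none, since $A[m]$ was the last node of $T_{A[1,m]}$. This yields the desired DFT sequence $A[1],\ldots,A[m],A[m+1]$ and closes the induction. The case where the defining set for $k$ is empty (so $A[m+1]$ is attached to the root) is handled identically by treating $r$ as a formal $A[0]$ of value $-\infty$: all of $A[1],\ldots,A[m]$ lie in the root's subtree, and the same "new rightmost child" argument applies.

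The only genuinely delicate point is the claim that after $A[k]$'s subtree in $T_{A[1,m]}$ no further nodes remain to be visited — i.e.\ that $A[k]$ and all of its ancestors can have no right siblings in $T_{A[1,m]}$. This, however, follows cleanly from the inductive hypothesis (which places $A[m]$ last) combined with the containment $A[m]\in T_{A[1,m]}[A[k]]$ coming from Observation~\ref{o.minheap1}; so the main obstacle reduces to a careful bookkeeping argument rather than any substantive combinatorial difficulty.
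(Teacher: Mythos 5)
Your proof is correct and follows essentially the same route as the paper's: both identify the parent $A[k]$ of the newly appended element, invoke Observation~\ref{o.minheap1} to place the intermediate elements $A[k+1],\ldots,A[m]$ inside $T[A[k]]$, and conclude from the rightmost-child insertion that the new element is visited last. The only difference is presentational---you run a global induction on the prefix length $m$, whereas the paper argues directly that each consecutive pair $A[i]$, $A[i+1]$ is correctly ordered---so no further comparison is needed.
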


  \noindent \textbf{Proof of Lemma \ref{l.minheapdft}.}
  Let $1\le i\le N-1$. It suffices to show that $A[i+1]$ comes after
  $A[i]$ in depth-first traversal order on $\cameraready{T[A]}$. Therefore, let
  $A[k]:=\pa_{\cameraready{T[A]}}(A[i+1])$. If $k=i$, hence $A[k]=A[i]$, we are done.
  If not, consider all nodes $A[k+1],...,A[i]$ between $A[k]$ and
  $A[i+1]$. By construction of $\cameraready{T[A]}$, we know that $A[i+1]<_SA[j]$
  for all $k+1\le j\le i$, while $A[k]<_SA[i+1]$, which implies that
  $A[k]<_SA[j]$ for all $k+1\le j\le i$. So, by observation
  \ref{o.minheap1}, all $A[j]\in T[A[k]]$ for $k+1\le j\le i$, implying
  in particular that $A[i]\in T[A[k]]$. Since during construction of
  $\cameraready{T[A]}$ $A[i+1]$ is appended as rightmost child of $A[k]$ after $A[i]$
  had been appended, $A[i+1]$ comes after $A[i]$ in depth-first
  traversal order on $\cameraready{T[A]}$. \qed\\
  
  We are now in position to prove theorem \ref{t.dualminheap}.
  
  \begin{theorem}
    Let $A[1,N]$ be an array of (mutually different) numbers and
    let $\overleftarrow{A}:=[A[N],...,A[1]]$ be the reversal of it. Then
    \begin{equation}
      (\cameraready{T[A]})^* = \cameraready{T[\overleftarrow{A}]}.
    \end{equation}
  \end{theorem}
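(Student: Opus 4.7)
The plan is to verify that $(\cameraready{T[A]})^*$ and $T[\ova{A}]$ coincide as rooted ordered trees by checking (i) vertex sets, (ii) depth-first traversal orders, and (iii) parent relations; these together determine the ordered tree, since for any two siblings $u,v$ one has $u$ left of $v$ iff $u <_T v$. Agreement of (i) is immediate from Definitions \ref{def.dualtree} and \ref{d.minheap}: both trees have vertex set $\{r\} \cup \{A[1], \ldots, A[N]\}$.

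For (ii), Lemma \ref{l.minheapdft} applied to $A$ lists the non-root DFT of $\cameraready{T[A]}$ as $A[1], \ldots, A[N]$, and Lemma \ref{l.dualorder} reverses this in the dual, giving the non-root DFT of $(\cameraready{T[A]})^*$ as $A[N], \ldots, A[1]$. Applying Lemma \ref{l.minheapdft} directly to $\ova{A}$ yields the non-root DFT of $T[\ova{A}]$ as $\ova{A}[1], \ldots, \ova{A}[N] = A[N], \ldots, A[1]$, matching.

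For (iii), I would show that in both trees the parent of $A[i]$ equals $A[i^*]$, where $i^* := \min\{m > i : A[m] <_S A[i]\}$, interpreted as $r$ when no such $m$ exists. On the reversed-heap side, $A[i]$ sits at position $j = N - i + 1$ in $\ova{A}$, and by Definition \ref{d.minheap} its parent is the largest-index $\ova{A}[\ell]$ with $\ell < j$ and $\ova{A}[\ell] <_S A[i]$; the substitution $m = N - \ell + 1$ converts ``largest $\ell < j$'' into ``smallest $m > i$'', yielding $A[i^*]$. On the dual side, Lemma \ref{l.dualparent} gives $\pa^*(A[i]) = \min R[A[i]]$ (minimum in the $\cameraready{T[A]}$-DFT order, i.e.\ index order by Lemma \ref{l.minheapdft}) when $R[A[i]] \ne \emptyset$, and $r$ otherwise.

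The main obstacle is thus to identify $\min R[A[i]]$ with $A[i^*]$, which I would handle via the subtree characterization $\cameraready{T[A]}[A[i]] = \{A[j] : i \le j < i^*\}$ (with $i^*$ replaced by $N+1$ when undefined). The inclusion $\supseteq$ is Observation \ref{o.minheap1} applied with $k := i$ and $l := i^* - i - 1$, since $A[m] >_S A[i]$ for $i < m < i^*$ by minimality of $i^*$. For $\subseteq$, the subtree of any node in a rooted ordered tree is a contiguous DFT interval beginning at that node, and Observation \ref{o.minheap2} forbids the strictly smaller $A[i^*]$ from lying in $\cameraready{T[A]}[A[i]]$, forcing the interval to terminate at index $i^* - 1$. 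Consequently $R[A[i]] = \{A[j] : j \ge i^*\}$, whose minimum in index order is $A[i^*]$; the empty case $R[A[i]] = \emptyset$ aligns exactly with $i^*$ being undefined, both yielding parent $r$.
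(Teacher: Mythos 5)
Your proposal is correct and takes essentially the same route as the paper's proof: both match the depth-first orders via Lemmas~\ref{l.minheapdft} and~\ref{l.dualorder}, and then match parents by combining Lemma~\ref{l.dualparent} with Observations~\ref{o.minheap1} and~\ref{o.minheap2}. The only difference is presentational --- you package the paper's $(\star)$/$(\star\star)$ argument into the explicit index $i^*$ and the interval characterization of $\cameraready{T[A]}[A[i]]$, whereas the paper reasons operationally about the step-by-step construction of $\cameraready{T[\overleftarrow{A}]}$; the underlying facts are identical.
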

  
  
  \begin{proof}
    By applying lemma \ref{l.minheapdft} for $\cameraready{T[\overleftarrow{A}]}$,
    the depth-first traversal order on nodes in
    $\cameraready{T[\overleftarrow{A}]}\setminus r$ agrees with the reverse order on
    $\{1,...,N\}$. By applying lemma \ref{l.minheapdft} for $\cameraready{T[A]}$ and
    combining it with lemma \ref{l.dualorder} for $\cameraready{T[A]}$, we see that the
    depth-first traversal order on $\cameraready{T[\overleftarrow{A}]}$ agrees with
    that on $\cameraready{T[A]}^*$.\\
    
    It remains to show that the parent of $A[k]$ in $(\cameraready{T[A]})^*$ agrees
    with the parent of $A[k]$ in $\cameraready{T[\overleftarrow{A}]}$. We recall
    lemma \ref{l.dualparent} and know that $\pa_{(\cameraready{T[A]})^*}(A[k])=r$ if
    $R_{\cameraready{T[A]}}[A[k]]=\emptyset$ (\emph{first case}\/) and
    $\pa_{(\cameraready{T[A]})^*}(A[k])=\min_{<_{\cameraready{T[A]}}}R_{\cameraready{T[A]}}[A[k]]$ if
    $R_{\cameraready{T[A]}}[A[k]]$
    is not empty (\emph{second case}\/).\\
    
    For the \emph{first case}, we are done if $A[k]=A[N]$, because then
    $\pa_{\cameraready{T[\overleftarrow{A}]}}(A[N])=r$ by construction of
    $\cameraready{T[\overleftarrow{A}]}$. If not, $R_{\cameraready{T[A]}}[A[k]]=\emptyset$
    translates into $A[i]\in T[A[k]]$ for $k+1\le i\le N$. That is, when
    constructing $\cameraready{T[\overleftarrow{A}]}$, there is no node in
    $T[[A[N],...,A[k+1]]]$ that is smaller than $A[k]$, in which case
    $A[k]$ is appended to $T[[A[N],...,A[k+1]]]$ as the rightmost child
    of the root, so also
    here $\pa_{\cameraready{T[\overleftarrow{A}]}}(A[N])=r$.\\
    
    In the \emph{second case}, we consider
    $A[l]:=\min_{\cameraready{T[A]}}R_{\cameraready{T[A]}}[A[k]]$, the parent of $A[k]$ in $(\cameraready{T[A]})^*$. So,
    by definition of $R_{\cameraready{T[A]}}[A[k]]$, we have that $A[i]\in \cameraready{T[A]}[A[k]]$ for
    all $i: k+1\le i\le l-1$. So, by observation \ref{o.minheap2},
    $A[k]<A[i]$ for all $i: k+1\le i\le l-1\; (\star)$.
    
    Furthermore, $A[l]\in R_{\cameraready{T[A]}}[A[k]]$ translates into the fact that
    $A[l]$, during the construction of $\cameraready{T[A]}$, was not appended to a child
    of any of the nodes in $\cameraready{T[A]}[A[k]]$, so $A[l]<A[i]$ for all nodes
    $A[i]\in \cameraready{T[A]}[A[k]]$, which implies in particular that $A[l]<A[k]\; (\star\star)$.
    
    \begin{sloppypar}
      So, when appending $A[k]$ to $T[[A[N,...,A[k+1]]]$ during the
      construction of $\cameraready{T[\overleftarrow{A}]}$, combining $(\star)$ and
      $(\star\star)$ yields that $A[l]$ was found to be the rightmost element
      in $[A[N,...,A[k+1]]$ that was smaller than $A[k]$, which agrees
      with the definition of the parent of $A[k]$ in
      $\cameraready{T[\overleftarrow{A}]}$. 
    \end{sloppypar}
    
  \end{proof}

 \section{Proofs of Lemmata \ref{l.dualorder}, \ref{l.dualleft}, \ref{l.siblingsubtree}}
  \label{app.lemma3}
  
  \textbf{Proof of Lemma \ref{l.dualleft}.}
  The edge $(w,v_2)$ in $T^*$ cannot be due to Rule 3, because $v_2$
  is not the immediate left sibling of $w$ in $T$, which would imply
  that $w\in T[v]$, which contradicts $w\in R[v]$ or $w$ being the
  root, which is established by $w=\pa^*(v)$ and lemma
  \ref{l.dualparent}.
  
  Note that, since $v$ is not the root, also Rule 1b does not apply.
  So the edge $(w,v_2)$ in $T^*$ must have come into existence by
  Rule 2. That is, $v_2=ils_{T^*}(v_3)$ where $v_2$ was the rightmost
  child of $v_3$ in $T$. We have $v_3\in T[v]$ and $\pa^*(v_3)=w$.
  We are done if $v_3=v$, because then $v_2$ is the immediate left
  sibling of $v$ in $T^*$. If not, we obtain the claim by induction
  on $\dep(v_2)-\dep(v)$. \qed\\

  \noindent \textbf{Proof of Lemma \ref{l.siblingsubtree}.}
  By Rule 3, we know that the immediate right sibling $v_2$ of $v_1$
  in $T$ is the parent of $v_1$ in $T^*$, in other words,
  $v_2=\pa^*(v_1)$. Corollary \ref{c.dualsubtree} implies that
  $T[v_1]\subset T^*[v_2]$, so we are done if $v_2=u_1$.  If not, then
  repeated application of Rule 3 implies that $u_1$ is an ancestor of
  $v_2$ in $T^*$. In other words, $v_2\in T^*[u_1]$, hence also
  $T^*[v_2]\subset T^*[u_1]$, which finally yields
  $T[v_1]\subset T^*[u_1]$, as claimed. \qed\\
  
  We can now proceed with proving lemma \ref{l.dualorder}.\\
  
  \noindent \textbf{Proof of Lemma \ref{l.dualorder}.}
  It suffices to show that $v<u$ implies $u<^*v$. If $v<u$,
  two different cases can apply, either $u\in T[v]$ or $u\in R[v]$.\\
  
  Ad $u\in T[v]$: Let $w:=\pa^*(v)$. Corollary \ref{c.dualsubtree}
  implies that $w$ is an ancestor of $u$. Since all ancestors of $u$
  in $T^*$ are greater than $v$ in terms of depth-first traversal
  order in $T$, we obtain the existence of a node $v_2>v$ such that
  $\pa^*(v_2)=w$ where, possibly, $v_2=u$ itself. We obtain the claim
  by applying lemma \ref{l.dualleft}.\\
  
  Ad $u\in R[v]$: Let $w$ be the least common ancestor of $v$ and $u$
  in $T$ and let $v_1, u_1$ be the children of $w$ such that $v\in T[v_1]$
  and $u\in T[u_1]$. By the prior case $u\in T[v]$, we know that $u<^*u_1$.
  Application of lemma \ref{l.siblingsubtree} then further yields that
  $u_1<^*v$, which implies the desired $u<^*v$. \qed
 
  \section{Proof of Theorem \ref{t.dualdual}.}
  \label{app.dualofdual}

  We first consider the case $R^*[v]=\emptyset$. Here, by depth-first
  traversal order in $T^*$, all ancestors $v_1<^* ... <^*v_l<^*v$ of
  $v$ (apart from the root $r$) and $v$ are rightmost children. By
  repeated application of Rule 3, we see that all nodes
  $v_1> ... > v_l > v$ are siblings in $T$, where $v$ is the
  leftmost. So, the parent of $v$ agrees with the parent of $v_1$,
  which is the rightmost child of the root $r$. By Rule 1b, we see
  that also in $T$, the parent of $v_1$ is the root $r$.\\

  We now consider the case $R^*[v]\ne\emptyset$.  First, $u<v$ implies
  $v<^*u$ using lemma \ref{l.dualorder}.  The assumption $u\in T^*[v]$
  implies that $v$ is an ancestor of $u$ in $T^*$, which is
  impossible, because lemma \ref{l.dualparent} then says that
  $v\not\in T[u]$. So $u\in R^*[v]$, and it remains to show that
  $u$ is the smallest node in $R^*[v]$, according to $<^*$.

  We assume the existence of $x\in R^*[v]$ that is smaller than $u$,
  and show that this leads to a contradiction. By lemma
  \ref{l.dualorder}, this implies that $u<x$, so either (1)
  $x\in T[u]$ or (2) $x\in R[u]$.

  Ad (1): For $x\in T[u]$, let $x_1$ be the child of $u$, such that
  $x\in T[x_1]$. If $x_1$ is a left sibling of $v$, we obtain
  $x\in T[x_1] \subset T^*[v]$ by lemma \ref{l.siblingsubtree}, a
  contradiction to $x\in R^*[v]$. If $x_1$ is a right sibling of $v$,
  we obtain $v\in T^*[x_1]$, again by lemma \ref{l.siblingsubtree},
  which implies $x_1<^*v$. Further, $x\in T[x_1]$ implies $x_1<x$, and
  further into $x<^*x_1$ by lemma \ref{l.dualorder}. Together, we
  obtain $x<^*v$, a contradiction to $x\in R^*[v]$.

  Ad (2): It remains to consider the case $x\in R[u]$. By depth-first
  traversal order in $T$, we have $u<v<x$, so in $T^*$, by lemma
  \ref{l.dualorder}, $x<^*v<^*u$.  This contradicts that
  $x\in R^*[v]$, which concludes the proof.
\qed

  \section{Proof of Proposition \ref{p.reversed}}
  \label{app.reversed}

  Note that $(a)-(c)$ are immediate. For $(d)-(g)$ note that rules
  for $\ova{T^*}$ basically reiterate the rules for the dual tree,
  while exchanging left with right.
  
  \begin{proof}
    Again, these are straightforward observations, obtained by reversing
    the order among the children of nodes in $T^*$, which yields, as one
    example, that in $\ova{T^*}$, the immediate left sibling $v$ of $u$
    in $T$ turns into the rightmost child of $u$ in $T^*$, which by
    reversing $T^*$ turns into the leftmost child of $T^*$, and so on.
    Computing $\ova{T}^*$, the dual of the reversed tree, we find that
    $(d)-(g)$ apply also for $\ova{T}^*$, just as for $\ova{T^*}$. Since
    one can show that $(d)-(g)$ are defining properties of $\ova{T^*}$,
    in analogy to the insight that rules 1-3 from Definition
    \ref{def.dualtree} give rise to the dual tree $T^*$ itself, we see
    that $\ova{T}^*$ and $\ova{T^*}$ must be identical.
  \end{proof}
  
  \section{Proof of Lemma \ref{l.pardep} and Theorem \ref{t.pda}.}
  \label{app.pda}

  In the following, we write $\dist_T(v,w)$ and $\dist_{T^*}(v,w)$ for
  the length of a minimum length path between $v$ and $w$ in $T$ and
  $T^*$, respectively. We will also write $y=(\pa)^i(v)$ and
  $y=(\pa^*)^i(v)$ if $y$ is the $i$-th ancestor of $v$ in $T$ or
  $T^*$, respectively.

  In the following, 'first', 'largest', and so on, refer to depth-first
  traversal order in $T$. When referring to depth-first traversal order
  in $T^*$, we will explicitly mention this.\\

  \noindent For the proofs, we recall that 
  \begin{equation*}
    \dep_T(v_1,v_2):=\min\{\dep_T(y)\mid v_1\le y\le v_2\}
  \end{equation*}
  is the minimal depth of nodes between (and including) $v_1$ and
  $v_2$. We then observe the following relationship for
  $v_1\le w\le v_2$:
  \begin{equation}
    \label{eq.transdep}
    \dep_T(v_1,v_2) = \min(\dep_T(v_1,w),\dep_T(w,v_2))
  \end{equation}

  \noindent \textbf{Proof of Lemma \ref{l.pardep}.}
  First, all nodes that follow $\pa^*(v)$ in depth-first
  traversal order until $w$ are in $T[\pa^*(v)]$, hence have depth
  greater than $\pa^*(v)$.  Second, by lemma \ref{l.dualparent}, all
  nodes between $v$ and $\pa^*(v)$ are members of $T[v]$, hence have
  greater depth than $v$. The dual parent $\pa^*(v)$ of $v$, by lemma
  \ref{l.dualparent} the first node in $R[v]$ following the nodes in
  $T[v]$ is either a right sibling of $v$ or a right sibling of one of
  the ancestors of $v$, all of which is a direct consequence of
  depth-first traversal order. Either way,
  $\dep_T(\pa^*(v))\le \dep_T(v)$.\qed\\

  \noindent \textbf{Proof of Theorem \ref{t.pda}.}
  Let $v:=\pda(v_1,v_2)$. We encounter the following situation: $v$ is
  the largest node smaller or equal to $v_2$ that is a $T^*$-ancestor
  of $v_1$. So, in particular, $\pa^*(v)>v_2$. By lemma
  \ref{l.dualparent}, $\pa^*(v)$ is the first node following $v$ that
  is not in the subtree $T[v]$ rooted at $v$. So, all nodes following
  $v$, until and including $v_2$ are in $T[v]$, hence have depth larger
  than $v$. That is,
  \begin{equation}
    \label{eq.primaldualchar1}
    \dep_T(v,v_2) = \dep_T(v),
  \end{equation}
 Further, again
  by lemma \ref{l.dualparent}, $\pa^*(y)$ is the first node following
  $y$ that is not in $T[y]$. Hence, by definition of depth-first order traversal,
  $\pa^*(y)$ is either the right sibling of $y$ or one of its ancestors
  (if $y$ is the rightmost child of its parent). Either way,
  \begin{equation}
    \label{eq.primaldualchar2}
    \dep_T(y,\pa^*(y)) = \dep_T(\pa^*(y))
  \end{equation}
  Let $i$ be such that $v=(\pa^*)^i(v_1)$ (*), that is, $v$ is the $i$-th
  ancestor of $v_1$ in $T^*$. Repeated application of \eqref{eq.transdep}
  yields
  \begin{multline}
    \label{eq.primaldualchar3}
    \dep_T(v_1,v_2)
    \stackrel{\eqref{eq.transdep}}{=}\\ \min(\dep_T(v_1,\pa^*(v_1)), \dep_T(\pa^*(v_1), (\pa^*)^2(v_1)),\\
    \ldots, \dep_T((\pa^*)^{i-1}(v_1), (\pa^*)^i(v_1)), \dep_T((\pa^*)^i(v_1),v_2))\\
    \stackrel{\eqref{eq.primaldualchar1}, \eqref{eq.primaldualchar2}}{=} \min(\dep_T(\pa^*(v_1)),\dep_T((\pa^*)^2(v_1)),
    \ldots, \dep_T((\pa^*)^i(v_1)))\\
    \stackrel{\eqref{eq.primaldualchar2},(*)}{=} \dep_T(v)
  \end{multline}
  so $v$ indeed achieves minimal depth among all nodes
  $v_1\le x\le v_2$.  Because of \eqref{eq.primaldualchar1}, all nodes
  following $v$ have depth larger than $v$, hence $v$ is also the
  largest node that minimizes the depth between (and including) $v_1$
  and $v_2$, which implies our claim. \qed
  
  \section{Proof of Lemma~\ref{lem:dual-decomposition}.}
  \label{app.bpdfuds_props}

The lemma requires to prove an equality between a dual tree and the tree-join of several dual trees.
In general, if $A$ is an induced subgraph of $T$ then one cannot always relate $A^*$ and $T^*$ in terms of inclusion (see e.g. Figure 1 with $A=T[u]$). First we will study more precisely which edges of $A^*$ are in $T^*$. 

\begin{definition}[Quasi-subtree] Let $T,A$ be trees such that $A$ is an induced subgraph of $T$. $A$ is a \emph{quasi-subtree} of $T$ if for any two nodes $u,v$ in $A$, $u=ils_A(v) \implies u=ils_T(v)$, and when $v$ is not the root of $A$, $u=rmc_A(v) \implies u=rmc_T(v)$.
\end{definition}

Observe that a subtree is also a quasi-subtree, but not the other way around.

\begin{lemma}
Let $A$ be a quasi-subtree of $T$.
All edges of $A^*$ that are non-incident to the root of $T$ are also present in $T^*$.
\label{lem:dual-edges-notroot}
\end{lemma}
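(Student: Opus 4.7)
The plan is to do a case split on which of Rules 1b, 2, 3 of Definition \ref{def.dualtree} generated each edge of $A^*$, and to show that under the quasi-subtree hypothesis the matching rule in $T$ produces the same edge in $T^*$. Write $(u,v)$ for an edge of $A^*$, so $u = \pa_{A^*}(v)$, and assume throughout that $u,v \ne r_T$.

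Rule 3 is the easy case: if $v = ils_A(w)$ and $u = w$, the $ils$ clause of the quasi-subtree definition gives $v = ils_T(w)$, and Rule 3 applied in $T$ yields $\pa_{T^*}(v) = w = u$. Rule 2 is the inductive case: if $v = rmc_A(w)$ with $w \ne r_A$ and $u = \pa_{A^*}(w)$, the $rmc$ clause (applicable precisely because $w \ne r_A$) gives $v = rmc_T(w)$; since $w \in A$ forces $w \ne r_T$ as well, Rule 2 applied in $T$ yields $\pa_{T^*}(v) = \pa_{T^*}(w)$, and the goal reduces to $\pa_{A^*}(w) = \pa_{T^*}(w)$, which is the same statement for $w$. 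A structural induction (e.g.\ on the depth of $v$ in $A^*$, or equivalently on the number of consecutive Rule 2 invocations needed to reach another rule) bottoms out at either a Rule 3 edge, already handled, or a Rule 1b edge incident to $r_A$; in the intended application where $r_A = r_T$ this base case is precisely the class of edges excluded by the hypothesis.

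The main obstacle is setting up the Rule 2 induction so that the base case cannot smuggle in an edge incident to $r_T$, in particular making sure every chain of Rule 2 reductions terminates in a Rule 3 step rather than a Rule 1b step pointing to $r_A \ne r_T$. A cleaner alternative bypasses the case analysis and appeals directly to Lemma \ref{l.dualparent}, which expresses $\pa_{A^*}(v) = \min_{<_A} R_A[v]$ and $\pa_{T^*}(v) = \min_{<_T} R_T[v]$ whenever these are not the respective roots. The quasi-subtree hypothesis forces $<_T$ restricted to $A$ to coincide with $<_A$, and forces the $A$-children of any internal node to occupy a contiguous block of $T$-children; together these properties prevent any $T$-node strictly between $v$ and the candidate parent in $<_T$ from lying outside $T[v]$, which pins the two minima down to the same vertex and closes the proof.
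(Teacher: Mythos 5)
Your first route is, in substance, the paper's own proof: Rule 3 edges transfer directly via the $ils$ clause, and Rule 2 edges are handled by inducting along the chain of reductions $v\mapsto w$ with $v=rmc_A(w)$ until a non--Rule-2 edge is reached; the paper organizes exactly this induction by enumerating the right siblings $u_n,\ldots,u_0$ of $v$ in $A^*$ from the rightmost (a Rule 3/1b edge) leftward. One small slip: the measure cannot be the depth of $v$ in $A^*$, since the Rule 2 reduction moves to a \emph{sibling} of $v$ in $A^*$ (same $A^*$-depth); your other suggestion, the number of Rule 2 steps needed to reach another rule (equivalently, induction on $\dep_A(v)$, as in the paper's proof of Lemma~\ref{l.dualparent}), is the correct one. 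The obstacle you flag at the base case is genuine and worth making explicit: if the root $r_A$ of $A$ differs from the root $r$ of $T$, the chain can terminate in a Rule 1b edge $(r_A,\,rmc_{A^*}(r_A))$ that is not incident to $r$, hence not excluded by the hypothesis, yet need not lie in $T^*$. For example, let $T$ have root $r$ with single child $a$, and let $a$ have children $b,c$; with $A=T[a]$ (a quasi-subtree) the edge $(a,c)$ is in $A^*$ by Rule 1b, but $R_T[c]=\emptyset$ gives $\pa_{T^*}(c)=r$. So the lemma as stated tacitly assumes $r_A=r$ (equivalently, "incident to the root of $A$"), which holds for both quasi-subtrees used in Lemma~\ref{lem:two-trees-join-is-dual}; the paper's proof makes the same silent assumption when it asserts that every edge of $A^*$ not incident to $r$ arises from Rule 2 or Rule 3. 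Your alternative via Lemma~\ref{l.dualparent} is also sound and arguably cleaner: for an induced subtree, $<_A$ is the restriction of $<_T$ and $R_A[v]=R_T[v]\cap A$, so one only needs $\min_{<_T}R_T[v]\in A$; the $ils$ clause settles this when $v$ has a right $A$-sibling, the $rmc$ clause pushes the question up to $\pa_A(v)$ otherwise, and the recursion bottoms out either at a node with a right $A$-sibling or at $r_A$ --- the latter being precisely the excluded root-incident edges once $r_A=r$.
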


\begin{proof}
 Let $r$ be the root of $T$. Edges of $A^*$ that are not incident to $r$ are either created by Rule 2 or by Rule 3 in Definition~\ref{def.dualtree}. 
First, consider the edges due to Rule 3. Let the edge $(u,v)\in A^*$ that arises from $v=rmc_{A^*}(u)$ for some $u\neq r$. Rule 3 was applied because $v=ils_{A}(u)$, and thus by hypothesis on $A$, $v=ils_{T}(u)$. Applying Rule 3 to $T$ yields that $v=rmc_{T^*}(u)$, hence $(u,v)\in T^*$.

Second, consider the edges due to Rule 2. Let $v$ be a node of $A^*$ and its parent $w\neq r$ in $A^*$, such that and $v=ils_{A^*}(u)$ for some $u$. We need to show that $(w,v)$ is in $T^*$ also. Let $u_0 = v$ and $u_1 = u, \ldots, u_n$ be all the right siblings of $v$ in $A^*$. We show by induction from $i=n$ to $i=0$ that $(w,u_i), n \geq i \geq 0$ is in $T^*$. 
For the base case ($i=n$), $u_n=rmc_{A^*}(w)$, thus $u_n = ils_A(w)$, thus by hypothesis on $A$, $u_n=ils_T(w)$, then $u_n=rmc_{T^*}(w)$. 
Now, assuming by induction that $(w,u_i)\in T^*$, consider the edge $(w,u_{i-1})$ where $u_{i-1} = ils_{A^*}(u_i)$. 
By Definition~\ref{def.dualtree}, $u_{i-1} = rmc_A(u_i)$, and since $w\neq r$, $u_i$ cannot be the root of $A$ therefore by hypothesis on $A$, $u_{i-1} = rmc_T(u_i)$, 
hence applying Definition~\ref{def.dualtree} to $T$ yields that $u_{i-1} = ils_{T^*}(u_i)$, hence that $(w,u_{i-1})\in T^*$. Therefore the induction is complete, and $(w,v)\in T^*$. 
\end{proof}

The next lemma will make use of the following observation.

\begin{observation} Let $T_1$ and $T_2$ be two trees (having roots $r_1,r_2$) such that $r_2$ has a single child. From Observation~\ref{obs:single_child}, $T_1^* \curvearrowright T_2^*$ exists. Its edges can be partitioned into three types: (i) $\{(rmc_{T_1^* \curvearrowright  T_2^*}(r_2),v) \mid pa_{T_1^*}(v) = r_1 \}$ (edges that were inserted from the children of the root of $T_1^*$), (ii) $\{(r_2,w) \mid pa_{T_2^*}(w) = r_2 \}$ (edges which were incident to $r_2$ in $T_2^*$), and (iii) edges that are neither incident to $r_2$ nor $rmc_{T_2^*}(r_2)$.
\label{obs:treejoin-edges}
\end{observation}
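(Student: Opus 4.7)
The plan is to reduce the claim to careful bookkeeping of which edges of $T_1^* \curvearrowright T_2^*$ come from $T_2^*$, which come from $T_1^*$, and which are freshly created by the join. First I would unpack Definition~\ref{def:treejoining}: the tree $T_1^* \curvearrowright T_2^*$ is obtained from $T_2^*$ by reparenting each child of $r_1$ in $T_1^*$ (together with its whole subtree) to become a new child of $rmc_{T_2^*}(r_2)$. The join operation does not alter any edges of $T_2^*$, and in particular does not change the children of $r_2$, so the rightmost child of $r_2$ in the join is still $rmc_{T_2^*}(r_2)$; this yields the identification $rmc_{T_1^* \curvearrowright T_2^*}(r_2) = rmc_{T_2^*}(r_2)$ that the statement uses implicitly in family (i).

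Next I would record the pivotal leaf fact. By hypothesis $r_2$ has a single child in $T_2$, so Observation~\ref{obs:single_child} gives that $rmc_{T_2^*}(r_2)$ is a leaf of $T_2^*$. Consequently the only edge of $T_2^*$ incident to $rmc_{T_2^*}(r_2)$ is the parent edge $(r_2, rmc_{T_2^*}(r_2))$, and that edge is also incident to $r_2$.

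Then I would sort each edge of $T_1^* \curvearrowright T_2^*$ by its origin. There are three natural origins: (a) the edges of $T_2^*$ (all preserved by the join), (b) the edges of $T_1^*$ with neither endpoint equal to $r_1$ (preserved, since the subtrees rooted at children of $r_1$ are transported intact), and (c) the freshly introduced edges from $rmc_{T_2^*}(r_2)$ to each child of $r_1$ in $T_1^*$. Type (c) is exactly family (i) by the identification above. Type (a) splits: edges incident to $r_2$ populate family (ii), while edges of $T_2^*$ not incident to $r_2$ are, by the leaf fact, also not incident to $rmc_{T_2^*}(r_2)$, hence lie in family (iii). Type (b) edges involve neither $r_2$ nor $rmc_{T_2^*}(r_2)$, since both are vertices of $T_2$ and $V[T_1] \cap V[T_2] = \emptyset$; they fall into family (iii).

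Finally I would verify pairwise disjointness: family (ii) edges are incident to $r_2$, whereas families (i) and (iii) are not (for (i), the second endpoint is a former child of $r_1 \ne r_2$); family (i) edges are incident to $rmc_{T_2^*}(r_2)$ whereas (iii) edges by definition are not. There is no real technical obstacle here; the one subtlety worth flagging is the distinction between incidence to $rmc_{T_2^*}(r_2)$ in $T_2^*$ (where it is a leaf) and incidence to the same vertex in the join (where it has acquired the children of $r_1$). The leaf property from Observation~\ref{obs:single_child} is exactly what guarantees that the leftover edges of $T_2^*$ slot cleanly into family (iii) instead of requiring a more elaborate description.
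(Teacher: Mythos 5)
Your argument is correct and is essentially the (only) natural unpacking of Definition~\ref{def:treejoining}: the paper states Observation~\ref{obs:treejoin-edges} without proof, treating exactly this bookkeeping---join preserves the edges of $T_2^*$ and of the subtrees below $r_1$, adds the edges from $rmc_{T_2^*}(r_2)$ to the former children of $r_1$, and the leaf property from Observation~\ref{obs:single_child} forces the residual $T_2^*$-edges into type (iii)---as immediate. Your write-up supplies precisely the details the paper leaves implicit, including the identification $rmc_{T_1^*\curvearrowright T_2^*}(r_2)=rmc_{T_2^*}(r_2)$ and the disjointness check, so nothing is missing.
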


\begin{lemma}
Let $T$ be a tree rooted at $r$. Let $T_2$ be the subtree of $T$ that is rooted at $rmc_{T}(r)$, and $T_1 = T \setminus T_2$. Then $T^*$ is $T_1^* \curvearrowright (r\rightarrow T_{2})^*$.
\label{lem:two-trees-join-is-dual}
\end{lemma}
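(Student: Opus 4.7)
My plan is to establish the equality of ordered trees $T^*$ and $T_1^* \curvearrowright (r\rightarrow T_2)^*$ by matching them on vertices, edges, and sibling orders. Write $c := rmc_T(r)$, so $T_2 = T[c]$, and let $c' := ils_T(c)$ (when it exists) denote the rightmost child of $r$ in $T_1$. As a preliminary step I would check that both $T_1$ and $r\rightarrow T_2$ are quasi-subtrees of $T$. For $T_1$, the only change caused by removal of $T_2$ is at the rightmost child of $r$, and the quasi-subtree condition on $rmc$ exempts the root; no $ils$-relation inside $T_1$ is disturbed. For $r\rightarrow T_2$, the root $r$ has the single child $c$ (so the $ils$-condition at $c$ is vacuous), and every non-root node of $r\rightarrow T_2$ keeps its $rmc$ from $T$. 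Lemma~\ref{lem:dual-edges-notroot} then supplies that every edge of $T_1^*$ not incident to $r$, and every edge of $(r\rightarrow T_2)^*$ not incident to $r$, is already an edge of $T^*$.

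Next I would apply Observation~\ref{obs:treejoin-edges} (with the observation's ``$T_2$'' instantiated as $r\rightarrow T_2$, whose root has the single child $c$) to split the edges of $T_1^* \curvearrowright (r\rightarrow T_2)^*$ into three types, and check that each type also appears in $T^*$ with the same sibling order. Type~(iii) consists of those edges of $T_1^*$ and $(r\rightarrow T_2)^*$ away from the root, already handled by Step~1. For Type~(ii), the children of $r$ in $(r\rightarrow T_2)^*$ are, by Rule~1b followed by iterated Rule~2 inside $r\rightarrow T_2$, exactly the nodes on the rightmost path from $c$ in $T$, with $c$ rightmost and deeper nodes further left; applying the same two rules directly to $T$ produces precisely these children of $r$ in $T^*$ in precisely this order. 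For Type~(i), the children of the root of $T_1^*$ are, by the same analysis inside $T_1$ together with the fact that $T_1$ and $T$ agree outside $T_2$, the rightmost path from $c'$ in $T$, with $c'$ rightmost. The join operation attaches them to $c$, while in $T^*$ the same nodes are attached to $c$ as its children via Rule~3 ($c' = rmc_{T^*}(c)$) followed by iterated Rule~2 (each further node on the rightmost path from $c'$ becomes the next immediate left sibling), again in the same left-to-right order.

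Because both trees share the vertex set $V(T)$ and hence have the same number of edges, the edge containment established in Step~2 forces equality of edge sets, and the sibling orders agree at $r$, at $c$, and (via the quasi-subtree property) at every remaining node. The main technical obstacle I anticipate is the bookkeeping around node $c$: in $(r\rightarrow T_2)^*$ it is a leaf because it has no $ils$ in $r\rightarrow T_2$, whereas in $T^*$ it hosts a nontrivial subtree formed by the rightmost path from $c'$ in $T$ and the subtrees hanging from it. The tree join is engineered precisely to fill this gap, but one has to verify that the subtrees inserted from $T_1^*$ are exactly what $T^*$ attaches at $c$ and that their left-to-right order is preserved. Once this matching is established, $T^* = T_1^* \curvearrowright (r\rightarrow T_2)^*$ follows.
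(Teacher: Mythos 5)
Your proposal is correct and follows essentially the same route as the paper's proof: both use the three-way edge classification of Observation~\ref{obs:treejoin-edges}, dispatch type~(iii) via the quasi-subtree property and Lemma~\ref{lem:dual-edges-notroot}, handle types~(i) and~(ii) by rule-chasing inductions along rightmost paths (Rules~1b, 2, 3), and close with the node-count/edge-inclusion argument. If anything, your bookkeeping for the type~(i) edges (attaching the rightmost path from $c'$ to $c$ in $T^*$ via Rule~3 followed by iterated Rule~2) is stated more precisely than in the paper's own write-up.
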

\begin{proof}
We will set $T'=T_1^* \curvearrowright (r\rightarrow T_{2})^*$. Observe that $T'$ has exactly the same set of nodes as $T$ and $T^*$, as the extra $r$ node in $(r\rightarrow T_2)^*$ is deleted by the tree joining operation. Therefore to show equality of two trees having the same number of nodes, one only needs to show an edge inclusion. We will show that the edges of $T'$ are in $T^*$.
We will consider the three types of edges in $T'$ as per Observation~\ref{obs:treejoin-edges}.
Edges of type (iii) were not affected by the tree joining operation, therefore those edges are also either in $T_1^*$ or in $(r\rightarrow T_{2})^*$. Observe that $T_1$ and $(r\rightarrow T_2)$ are both quasi-subtrees of $T$. Therefore, by  lemma~\ref{lem:dual-edges-notroot} applied twice (once with $A=T_1$ and then with $A=T_2$), edges of type (iii) are in $T^*$.

At this point, to prove the edge inclusion, what remain to be shown is that edges of type (i) and (ii) of $T'$ are also in $T^*$.

Consider edges of type (ii), i.e. all the children $w_n,\ldots,w_1$ of $r$ in $(r\rightarrow T_2)^*$, from right to left.
We will show by induction that they are exactly the children of $r$ in $T^*$ also from right to left. For the base case, the rightmost child of $r$ in $(r\rightarrow T_2)^*$ is the root of $T_2$, same as in $T^*$ by Rule 1b. The induction step is as follows. If $w_i = ils_{(r\rightarrow T_2)^*}(w_{i-1})$, then $w_i=rmc_{(r\rightarrow T_2)}(w_{i-1})$ by Rule 3, as $((r\rightarrow T_2)^*)^*=(r\rightarrow T_2)$. Since $T_2$ is a subtree of $T$, $w_i = rmc_{T}(w_{i-1})$ and thus $w_i = ils_{T^*}(w_{i-1})$ by Rule 2, which completes the induction.

Finally, for the edges of type (i), consider all the children $v_m, \ldots, v_1$ of $r$ in $T_1^*$, from right to left. We will show that they are children of $r$ in $T^*$, also from right to left. 
For this, we set up an induction again. The base case examines $y_1$, which is the right-most child of $r$ in $T_1^*$, and remains so in $T^*$ by Rule 1b. For the inductive step, assume that $y_i = ils_{T_1^*}(y_{i-1})$, then again a similar reasoning as in the paragraph before yields that $y_i$ equals to $rmc_{T_1}(y_{i-1})$ (using Rule 3), also to $rmc_{T}(y_{i-1})$ (using that $T_1 = T \setminus T_2$), and finally to $ils_{T^*}(y_{i-1})$ (using Rule 2) which proves the induction. This concludes the proof, as all edges in $T'$ are therefore in $T^*$.
\end{proof}

  \noindent \textbf{Proof of Lemma~\ref{lem:dual-decomposition}.}

We prove this by induction over $n$.
The case $n=1$ is immediate. 
Assume that the lemma is true for $T'$ consisting of $r$ and $n-1$ subtrees $A_1,\ldots,A_{n-1}$. We now add $A_n$ as the rightmost child of the root of $T'$ in order to obtain $T$. 
Observe that setting $T_2=A_n$ and $T_1=T'$ satisfy the conditions of Lemma~\ref{lem:two-trees-join-is-dual}, therefore $T^*$ is equal to $(T')^* \curvearrowright (r\rightarrow A_n)^*$.
By induction,  $T^{\prime*}  =  (r\rightarrow A_1)^* \curvearrowright   (r\rightarrow A_2)^* \curvearrowright  \ldots \curvearrowright (r\rightarrow A_n)^*$, which concludes the proof.
\qed



\end{document}